\newtheorem{theorem}{Theorem}
\newtheorem{proposition}{Proposition}
\newtheorem{lemma}{Lemma}
\newtheorem{definition}{Definition}
\newcommand{\eps}{\ensuremath{\varepsilon}}
\newcommand{\abs}[1]{\ensuremath{\mathopen\lvert #1 \mathclose\rvert}}
\newcommand{\norm}[1]{\ensuremath{\mathopen\lVert #1 \mathclose\rVert}}
\newcommand{\Abs}[1]{\ensuremath{\left| #1 \right|}}
\newcommand{\RR}{\ensuremath{\mathbb{R}}}
\newcommand{\NN}{\ensuremath{\mathbb{N}}}
\newcommand{\deltasel}{\ensuremath{\operatorname{Sel}_\delta}}
\newcommand{\perm}[1]{\ensuremath{\mathit{Perm}(#1)}}
\renewcommand{\P}{\ensuremath{\mathrm{P}}}
\newcommand{\PPAD}{\ensuremath{\mathrm{PPAD}}}
\newcommand{\FIXP}{\ensuremath{\mathrm{FIXP}}}
\newcommand{\subslash}[1]{\ensuremath{/_{\mkern-5mu#1}}}
\begin{document}

\title{Computational Complexity of Computing a Quasi-Proper Equilibrium}
\author[1]{Kristoffer Arnsfelt Hansen\thanks{Supported by the Independent Research Fund Denmark under grant no. 9040-00433B.}}
\author[2]{Troels Bjerre Lund}
\affil[1]{Aarhus University}
\affil[2]{IT-University of Copenhagen}
\date{}
\maketitle
\begin{abstract}
  We study the computational complexity of computing or approximating
  a quasi-proper equilibrium for a given finite extensive form game of
  perfect recall. We show that the task of computing a symbolic
  quasi-proper equilibrium is $\PPAD$-complete for two-player
  games. For the case of zero-sum games we obtain a polynomial time
  algorithm based on Linear Programming. For general $n$-player games
  we show that computing an approximation of a quasi-proper
  equilibrium is $\FIXP_a$-complete. Towards our results for
  two-player games we devise a new perturbation of the strategy space
  of an extensive form game which in particular gives a new proof of
  existence of quasi-proper equilibria for general $n$-player games.
\end{abstract}

\section{Introduction}

A large amount of research has gone into
defining~\cite{Selten65,IJGT:Selten75,IJGT:Myerson78,E:KrepsW82,IJGT:vanDamme84}
and
computing~\cite{E:vonStengel02,ET:MiltersenSorensen10,SAGT:HansenMS10,AAAI:FarinaGG17,EC:HansenL18,TCS:Hansen19}
various refinements of Nash equilibria~\cite{AM:Nash51}. The
motivation for introducing these refinements has been to eliminate
undesirable equilibria, e.g., those relying on playing dominated
strategies.

The quasi-proper equilibrium, introduced by van Damme~\cite{IJGT:vanDamme84},
is one of the more refined solution concepts for extensive form games. Any
quasi-proper equilibrium is quasi-perfect, and therefore also sequential, and
also trembling hand perfect in the associated normal form game. Beyond being a
further refinement of the quasi-perfect equilibrium~\cite{IJGT:vanDamme84}, it
is also conceptually related in that it addresses a deficiency of the direct
translation of a normal form solution concept to extensive form games. One of
the most well known refinements is Selten's trembling hand perfect equilibrium,
originally defined~\cite{Selten65} for normal form games, and the solution
concept is usually referred to as normal-form perfect. This can be translated
to extensive form games, by applying the trembling hand definition to each
information set of each player, which yields what is now known as
extensive-form perfect equilibria~\cite{IJGT:Selten75}. However, this
translation introduces undesirable properties, first pointed out by
Mertens~\cite{GEB:Mertens95}. Specifically, Mertens presents a certain
two-player voting game where all extensive-form perfect equilibria have weakly
dominated strategies in their support. That is, extensive-form perfection is in
general inconsistent with {\em admissibility}. Mertens argues that
quasi-perfection is conceptually superior to Selten's notion of extensive-form
perfection, as it avoids the cause of the problem in Mertens' example. It
achieves this with a subtle modification of the definition of extensive-form
perfect equilibria, which in effect allows each player to plan as if they
themselves were unable to make any future mistakes. Further discussion of
quasi-perfection can be found in the survey of Hillas and
Kohlberg~\cite{HGT:HillasK02}.

One of the most restrictive equilibrium refinements of normal-form games is
that of Myerson's normal-form proper equilibrum~\cite{IJGT:Myerson78}, which is
a refinement of Selten's normal-form perfect equilibrium. Myerson's definition
can similarly be translated to extensive form, again by applying the definition
to each information set of each player, which yields the extensive-form proper
equilibria. Not surprisingly, all extensive-form proper equilibria are also
extensive-form perfect. Unfortunately, this also means that Merten's critique
applies equally well to extensive-form proper equilibria. Again, the definition
can be subtely modified to sidestep Merten's example, which then gives the
definition for quasi-proper-equilibria~\cite{IJGT:vanDamme84}. It is exactly
this solution concept that is the focus of this paper.

\subsection{Contributions}

The main novel idea of the paper is a new perturbation of the strategy space of
an extensive form game of perfect recall, in which a Nash equilibrium is an
$\eps$-quasi-proper equilibrium of the original game. This construction works
for \emph{any} number of players and in particular directly gives a new proof of
existence of quasi-proper equilibria for general $n$-player games.

From a computational perspective we can, in the important case of two-player
games, exploit the new pertubation in conjunction with the sequence form of
extensive form games~\cite{GEB:Koller96} to compute a symbolic quasi-proper
equilibrium by solving a Linear Complementarity Problem. This immediately
implies $\PPAD$-membership for the task of computing a symbolic quasi-proper
equilibrium. For the case of zero-sum games a quasi-proper equilibrium can be
computed by solving just a Linear program which in turn gives a polynomial time
algorithm.

For games with more than two players there is, from the viewpoint of
computational complexity, no particular advantage in working with the sequence
form. Instead we work directly with behavior strategies and go via so-called
$\delta$-almost $\eps$-quasi-proper equilibrium, which is a relaxation of
$\eps$-quasi-proper equilibrium.  We show $\FIXP_a$-membership for the task of
computing an approximation of a quasi-proper equilibrium. We leave the question
of $\FIXP$-membership as an open problem similarly to previous results about
computing Nash equilibrium refinements in games with more than two
players~\cite{SAGT:EtessamiHMS14,GEB:Etessami2021,EC:HansenL18}.

Since we work with refinements of Nash equilibrium, $\PPAD$-hardness for
two-player games and $\FIXP_a$-hardness for $n$-player games, with $n\geq 3$
follow directly. This combined with our membership results for $\PPAD$ and
$\FIXP_a$ implies $\PPAD$-completeness and $\FIXP_a$-completeness,
respectively.

\subsection{Relation to previous work}

Any strategic form game may be written as an extensive form game of comparable
size, and any quasi-proper equilibrium of the extensive form representation is a
proper equilibrium of the strategic form game. Hence our results fully
generalize previous results for computing~\cite{EC:Sorensen12} or
approximating~\cite{EC:HansenL18} a proper equilibrium. The generalization is
surprisingly clean, in the sense that if a bimatrix game is translated into an
extensive form game, the strategy constraints introduced in this paper will end
up being identical to those defined in~\cite{EC:Sorensen12} for the given
bimatrix game. This is surprising, since a lot of details have to align for this
structure to survive through a translation to a different game model.  Likewise,
if a strategic form game with more than two players is translated into an
extensive form game, the fixed point problem we construct in this paper is
identical to that for strategic form games~\cite{EC:HansenL18}.

The quasi-proper equilibria are a subset of the quasi-perfect
equilibria, so our positive computational results also generalize the
previous results for quasi-perfect
equilibria~\cite{ET:MiltersenSorensen10}. Again, the generalization is
clean; if all choices in the game are binary, then quasi-perfect and
quasi-proper coincide, and the constraints introduced in this paper
work out to be exactly the same as those for computing a quasi-perfect
equilibrium. The present paper thus manages to cleanly generalize two
different constructions in two different game models.

\section{Preliminaries}

\subsection{Extensive Form Games}
A game $\Gamma$ in \emph{extensive form} of imperfect information
with~$n$ players is given as follows. The structure of $\Gamma$ is
determined by a finite tree $T$. For a non-leaf node $v$, let $S(v)$
denote the set of immediate successor nodes. Let $Z$ denote the set of
leaf nodes of $T$. In a leaf-node $z \in Z$, player $i$ receives
utility $u_i(z)$. Non-leaf nodes are either chance-nodes or
decision-nodes belonging to one of the players. To every chance
node~$v$ is associated a probability distribution on $S(v)$. The set
$P_i$ of decision-nodes for Player~$i$ is partitioned into information
sets. Let $H_i$ denote the information sets of Player~$i$. To every
decision node $v$ is associated a set of $\Abs{S(v)}$ actions and
these label the edges between $v$ and $S(v)$. Every decision node
belonging to a given information set $h$ shares the same set $C_h$ of
actions. Define $m_h=\abs{C_h}$ to be the number of actions of every
decision node of $h$. The game $\Gamma$ is of \emph{perfect recall} if
every node $v$ belonging to an information set $h$ of Player~$i$ share
the same sequence of actions and information sets of Player~$i$ that
are observed on the path from the root of $T$ to $v$.  We shall only
be concerned with games of perfect recall~\cite{AMS:Kuhn53}.

A local strategy for Player~$i$ at information set $h \in H_i$ is a probability
distribution $b_{ih}$ on $C_h$ assigning a behavior probability to each action
in $C_h$ and in turn induces a probability distribution on $S(v)$ for every
$v \in h$. A local strategy $b_{ih}$ for every information set $h \in H_i$
defines a behavior strategy $b_i$ for Player~$i$. The behavior strategy $b_i$ is
fully mixed if $b_{ih}(c)>0$ for every $h \in H_i$ and every $c \in C_h$. Given
a local strategy $b'_{ih}$ denote by $b_i/b'_{ih}$ the result of replacing
$b_{ih}$ by $b'_{ih}$. In particular if $c \in C_h$ we let $b_i/c$ prescribe
action $c$ with probability~$1$ in $h$. For another behavior strategy $b'_i$ and
an information set $h$ for Player~$i$ we let $b_i\subslash{h}b'_i$ denote the
behavior strategy that chooses actions according to $b_i$ until $h$ is reached
after which actions are chosen according to $b'_i$. We shall also write
$b_i\subslash{h}b'_i/c=b_i\subslash{h}(b'_i/c)$. A behavior strategy profile
$b=(b_1,\dots,b_n)$ consists of a behavior strategy for each player. Let $B$ be
the set of all behavior strategy profiles of $\Gamma$. We let
$b_{-i}=(b_1,\dots,b_{i-1},b_{i+1},\dots,b_n)$ and
$(b_{-i};b'_i)=b/b'_i = (b_1,\dots,b_{i-1},b'_i,b_{i+1},\dots,b_n)$. Furthermore, for
simplicity of notation, we define $b\subslash{h}b'_i = b/(b_i\subslash{h}b'_i)$,
and $b\subslash{h}b'_i/c=b/(b_i\subslash{h}b'_i/c)$.

A behavior strategy profile $b=(b_1,\dots,b_n)$ gives together with
the probability distributions of chance-nodes a probability
distribution on the set of paths from the root-node to a leaf-node of
$T$. We let $\rho_b(v)$ be the probability that $v$ is reached by this
path and for an information set $h$ we let
$\rho_b(h)=\sum_{v \in h}\rho_b(v)$ be the total probability of
reaching a node of $h$. Note that we define $\rho_b(v)$ for all nodes
$v$ of $T$. When $\rho_b(h)>0$ we let $\rho_b(v \mid h)$ be the
conditional probability that node $v$ is reached given that $h$ is
reached. The \emph{realization weight} $\rho_{b_i}(h)$ for Player~$i$
of an information set $h\in H_i$ is the product of behavior
probabilities given by $b_i$ on any path from the root to $h$. Note
that this is well-defined due to the assumption of perfect recall.

Given a behavior strategy profile $b=(b_1,\dots,b_n)$, the payoff to
Player~$i$ is $U_i(b)=\sum_{z\in Z} u_i(z)\rho_b(z)$. When
$\rho_b(h)>0$ the conditional payoff to Player~$i$ given that $h$ is
reached is then $U_{ih}(b)=\sum_{z\in Z}u_i(z)\rho_b(z\mid h)$.

Realization weights are also defined on actions, to correspond to Player~$i$'s
weight assigned to the given action:

\begin{equation}
  \forall h\in H_i, c\in C_h:\quad \rho_{b_i}(c) = \rho_{b_i}(h)b_i(c)
\end{equation}

We note that the realization weight of an information set is equal to
that of the most recent action by the same player, or is equal to~1,
if no such action exists.

A realization plan for Player~$i$ is a strategy specified by its
realization weights for that player. As shown by
Koller~et~al.~\cite{GEB:Koller96}, the set of valid realization
weights for Player~$i$ can be expressed by the following set of linear
constraints
\begin{equation}\label{eq:seqform}
  \forall h\in H_i:\quad \rho_{b_i}(h)=\sum_{c\in
    C_h}\rho_{b_i}(c)\quad\land\quad\forall c\in C_h:
  \rho_{b_i}(c)\geq0
\end{equation}
in the variables $\rho_{b_i}(c)$ letting $\rho_{b_i}(h)$ refer to the
realization weight of the most recent action of Player~$i$ before
reaching information set $h$ or to the constant~1 if $h$ is the first
time Player~$i$ moves.  This formulation is known as the sequence
form~\cite{GEB:Koller96}, and has the advantage that for two-player
games, the utility of each player is bilinear, i.e., linear in the
realization weights of each player. As shown by Koller~et~al.\ this
allows the equilibria to be characterized by the solutions to a Linear
Complementarity Problem for general sum games, and as solutions to a
Linear Program for zero-sum games. We will build on this insight for
computing quasi-proper equilibria of two-player games.

Given a behavior strategy for a player, the corresponding realization
plan can be derived by multiplying the behavior probability of an action
with the realization weight of its information set. However, it is not
always the case that the reverse is possible. The behavior probability
of an action is the ratio of the realization weight of an action to the
realization weight of its information set, but if any of the
preceeding actions by the player have probability 0, the ratio works out
to $\frac00$. In the present paper, the restriction on the strategy
space ensures that no realization weight is zero, until we have
retrieved the behavior probabilities.

A strategy profile $b$ is a Nash equilibrium if for every $i$ and
every behavior strategy profile $b'_i$ of Player~$i$ we have
$U_i(b)\geq U_i(b/b'_i)$. Our object of study is quasi-proper
equilibrium defined by van~Damme~\cite{IJGT:vanDamme84} refining the
Nash equilibrium. We first introduce a convenient notation for
quantities used in the definition. Let $b$ be a behavior strategy
profile, $h$ an information set of Player~$i$ such that $\rho_b(h)>0$,
and $c \in C_h$. We then define
\begin{equation}
  K_i^{h,c}(b) = \max_{b'_i} U_{ih}(b\subslash{h}b'_i/c) \enspace .
\end{equation}
When $b'_i$ is a pure behavior strategy we say that
$b\subslash{h}b'_i$ is a $h$-local \emph{purification} of $b$. We note
that $U_{ih}(b\subslash{h}b'_i/c)$ always assumes its maximum for a
pure behavior strategy $b'_i$.
\begin{definition}[Quasi-proper equilibrium]
  Given $\eps>0$, a behavior strategy profile~$b$ is an
  $\eps$-quasi-proper equilibrium if $b$ is fully mixed and satisfies
  for every~$i$, every information set~$h$ of Player~$i$, and
  every $c,c' \in C_h$, that $b_{ih}(c) \leq \eps b_{ih}(c')$ whenever
  $K_i^{h,c}(b) < K_i^{h,c'}(b)$.
  
  A behavior strategy profile $b$ is a quasi-proper equilibrium if and
  only if it is a limit point of a sequence of $\eps$-quasi-proper
  equilibria with $\eps \rightarrow^+0$.
\end{definition}
We shall also consider a relaxation of quasi-proper equilibrium
in analogy to relaxations of other equilibrium refinements due to
Etessami~\cite{GEB:Etessami2021}.
\begin{definition}
  Given $\eps>0$ and $\delta>0$, a behavior strategy profile~$b$ is a
  $\delta$-almost $\eps$-quasi-proper equilibrium if $b$ is fully
  mixed and satisfies for every Player~$i$, every information set~$h$
  of Player~$i$, and every $c,c' \in C_h$ that
  $b_{ih}(c) \leq \eps b_{ih}(c')$ whenever
  $K_i^{h,c}(b) + \delta \leq K_i^{h,c'}(b)$.
\end{definition}

\subsection{Strategic Form Games}
A game $\Gamma$ in \emph{strategic form} with $n$ players is given as
follows. Player~$i$ has a set $S_i$ of \emph{pure strategies}. To a
pure strategy profile $a=(a_1,\dots,a_n)$ Player~$i$ is given utility
$u_i(a)$. A mixed strategy $x_i$ for Player~$i$ is a probability
distribution on $S_i$. We identify a pure strategy with the mixed
strategy that selects the pure strategy with probability~$1$. A
strategy profile $x=(x_1,\dots,x_n)$ consists of a mixed strategy for
each player. To a strategy profile $x$ Player~$i$ is given utility
$U_i(x)=\sum_{a \sim x} u_i(a)\prod_j x_j(a_j)$. A strategy
profile~$x$ is fully mixed if $x_i(a_i)>0$ for all~$i$ and all
$a_i\in S_i$. We let
$x_{-i}=(x_1,\dots,x_{i-1},x_{i+1},\dots,x_n)$. Given a
strategy~$x'_i$ for Player~$i$ we define
$(x_{-i};x'_i)=x/x'_i=(x_1,\dots,x_{i-1},x'_i,x_{i+1},\dots,x_n)$.

A strategy profile $x$ is a Nash equilibrium if for every $i$ and
every strategy~$x'_i$ of Player~$i$ we have $U_i(x/x'_i)\leq
U_i(x)$. Myerson defined the notion of proper
equilibrium~\cite{IJGT:Myerson78} refining the Nash equilibrium.
\begin{definition}[Proper equilibrium]
  Given $\eps>0$, a strategy profile $x$ is an $\eps$-proper
  equilibrium if $x$ is fully mixed and satisfies for every~$i$ and
  every $c,c' \in S_i$ that $x_i(c) \leq \eps x_i(c')$ whenever
  $U_i(x_{-i};c) < U_i(x_{-i};c')$.

  A strategy profile $x$ is a proper equilibrium if and only if it is
  a limit point of a sequence of $\eps$-proper equilibria with
  $\eps\rightarrow^+0$.
\end{definition}
For proper equilibrium we also consider a relaxation as suggested by
Etessami~\cite{GEB:Etessami2021}.
\begin{definition}
  Given $\eps>0$ and $\delta>0$, a strategy profile $x$ is a
  $\delta$-almost $\eps$-proper equilibrium if $x$ is fully
  mixed and satisfies for every~$i$ and
  every $c,c' \in S_i$ that $x_i(c) \leq \eps x_i(c')$ whenever
  $U_i(x_{-i};c) +\delta \leq  U_i(x_{-i};c')$.
\end{definition}

\subsection{Complexity Classes}
We give here only a brief description of the classes $\PPAD$ and
$\FIXP$ and refer to Papadimitriou~\cite{JCSS:Papadimitriou94} and
Etessami and Yannakakis~\cite{SICOMP:EtessamiY10} for detailed
definitions and discussion of the two classes.

$\PPAD$ is a class of discrete total search problems, whose totality is
guaranteed based on a parity argument on a directed graph. More formally $\PPAD$
is defined by a canonical complete problem \textsc{EndOfTheLine}. Here a
directed graph is given implicitly by predecessor and successor circuits, and
the search problem is to find a degree~1 node different from a given degree~1
node. We do not make direct use of the definition of $\PPAD$ but instead prove
$\PPAD$-membership indirectly via Lemke's algorithm~\cite{MS:Lemke65} for
solving a Linear Complementarity Problem (LCP).

$\FIXP$ is the class of real-valued total search problems that can be
cast as Brouwer fixed points of functions represented by
$\{+,-,*,/,\max,\min\}$-circuits computing a function mapping a convex
polytope described by a set of linear inequalities to itself. The
class $\FIXP_a$ is the class of \emph{discrete} total search problems
that reduce in polynomial time to \emph{approximate} Brouwer fixed
points. We will prove $\FIXP_a$ membership directly by constructing an
appropriate circuit.

\section{Two-Player Games}

In this section, we prove that computing a single quasi-proper equilibrium of a
two-player game $\Gamma$ can be done in \PPAD{}, and in the case of zero-sum
games, it can be computed in \P{}. We are using the same overall approach as has
been used for computing quasi-perfect equilibria of extensive
form~games~\cite{ET:MiltersenSorensen10}, proper equilibria of
two-player~games~\cite{EC:Sorensen12}, and proper equilibria of poly-matrix
games~\cite{EC:HansenL18}.

The main idea is to construct a new game $\Gamma_\eps$, where the
strategy space is slightly restricted for both players, in such a way
that equilibria of the new game are \eps-quasi-proper equilibria of the
original game. This construction also provides a new proof of existence
for quasi-proper equilibria of $n$-player games, since there is nothing
in neither the construction nor the proof that requires the game to have
only two players. However, for two players, the strategy constraints can
be enforced using a symbolic infinitesimal \eps, which can be part of
the solution output, thereby providing a witness of the quasi-properness
of the computed strategy.

We will first describe the strategy constraints. At a glance, the construction
consists of fitting the strategy constraints for \eps-proper
equilibria~\cite{EC:Sorensen12} into the strategy constraints of each of the
information sets of the sequence form~\cite{GEB:Koller96}, discussed in the
preliminaries section, equation (\ref{eq:seqform}).

The constraints for \eps-proper equilibria~\cite{EC:Sorensen12} restricts the
strategy space of each player to be an \eps-permutahedron. Before the technical
description of this, we define the necessary generalization of the
permutahedron. A permutahedron is traditionally over the vector $(1,\ldots,n)$,
but it generalizes directly to any other set as well.

\begin{definition}[Permutahedron]
  Let $\alpha \in \RR^m$ with all coordinates being distinct. A
  permutation $\pi \in S_m$ acts on $\alpha$ by permuting the
  coordinates of $\alpha$, i.e.~$(\pi(\alpha))_i=\alpha_{\pi(i)}$. We
  define the permutahedron $\perm{\alpha}$ over $\alpha$ to be the
  convex hull of the set $\{\pi(\alpha) \mid \pi \in S_m\}$ of the
  $m!$ permutations of the coordinates of $\alpha$.
\end{definition}
A very useful description of the permutahedron is by its $2^m-2$
facets.
\begin{proposition}[Rado~\cite{JLMS:Rado52}]
  Suppose $\alpha_1>\alpha_2>\cdots>\alpha_m$. Then
  \[
    \perm{\alpha}=  \biggl\{x \in \RR^m \biggm| \sum_{i=1}^m x_i = \sum_{i=1}^m \alpha_i \wedge \forall  S\notin \{\emptyset,[m]\} :  \sum_{c \in S} x_c \geq \sum_{i=1}^{\abs{S}}\alpha_{m-i+1} \biggr\} \enspace .
  \]\label{PROP:PermFacets}
\end{proposition}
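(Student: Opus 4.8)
The plan is to prove the two set inclusions separately. Write $Q$ for the polytope on the right-hand side. For the inclusion $\perm{\alpha}\subseteq Q$, since $Q$ is defined by linear constraints it suffices to check that every vertex $\pi(\alpha)$ satisfies them. The equality $\sum_i x_i = \sum_i \alpha_i$ is immediate since permuting coordinates preserves the sum. For the facet inequalities, fix $S$ with $\abs{S}=k$, $1\le k\le m-1$; then $\sum_{c\in S}(\pi(\alpha))_c$ is a sum of $k$ distinct entries of $\alpha$, and the smallest possible such sum is obtained by taking the $k$ smallest entries, namely $\sum_{i=1}^{k}\alpha_{m-i+1}$. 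Hence every vertex lies in $Q$, and by convexity $\perm{\alpha}\subseteq Q$.

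The harder inclusion is $Q\subseteq\perm{\alpha}$, and I expect this to be the main obstacle. I would argue by induction on $m$ (the case $m=1$ being trivial). Take $x\in Q$; without loss of generality sort its coordinates so that $x_{j_1}\ge x_{j_2}\ge\cdots\ge x_{j_m}$, and I claim $x$ is a convex combination of points of the form $\pi(\alpha)$. The standard approach is a greedy/peeling argument: consider the coordinate $j_m$ carrying the smallest value. Using the constraint for the singleton $S=\{j_m\}$ we get $x_{j_m}\ge\alpha_m$, and using the constraint for the complement $S=[m]\setminus\{j_m\}$ together with the equality constraint we get $x_{j_m}\le\alpha_m + (\text{slack})$; more precisely one shows $\alpha_m \le x_{j_m}\le \alpha_{m-1}$ is not quite what is needed — instead one shows that after an appropriate transfer of mass one can write $x$ as a convex combination $x=\sum_\pi \lambda_\pi\,\pi(\alpha)$ by first expressing $x$ as $\lambda\cdot y + (1-\lambda)\cdot z$ where $z$ has its minimum coordinate equal to $\alpha_m$ in position $j_m$ and $y\in Q$ has strictly fewer "free" coordinates, then recursing. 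An alternative, cleaner route is to invoke the fact that $Q$ is a bounded polytope contained in the hyperplane $\sum_i x_i=\sum_i\alpha_i$ and to show directly that every vertex of $Q$ is a permutation of $\alpha$: at a vertex, $m-1$ of the facet inequalities (beyond the equality) must be tight and linearly independent, and one shows the tight sets $S$ can be taken to form a chain $S_1\subsetneq S_2\subsetneq\cdots\subsetneq S_{m-1}$, which forces the coordinates of $x$, read in the order induced by this chain, to equal $\alpha_m,\alpha_{m-1},\dots,\alpha_1$ successively; distinctness of the $\alpha_i$ guarantees this is consistent and yields exactly a permutation of $\alpha$. Either way, the crux is the combinatorial bookkeeping showing that tight constraints can be arranged into a chain.

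Since this is a classical result due to Rado, I would in practice either cite it directly or include the short vertex-based argument of the previous paragraph, taking care to (i) justify that the tight inequality sets at a vertex can be chosen to be nested, and (ii) deduce from a maximal nested chain $\emptyset\subsetneq S_1\subsetneq\cdots\subsetneq S_{m-1}\subsetneq[m]$ that the partial sums of $x$ along the chain match those of the sorted $\alpha$, whence $x$ is a permutation of $\alpha$. The forward inclusion and the dimension count ($Q$ lives in an $(m-1)$-dimensional affine subspace, so $m-1$ independent tight facets pin down a vertex) are routine; the nestedness of the tight sets is the one place where a genuine argument (an uncrossing/submodularity-type exchange on the constraint functions $S\mapsto\sum_{c\in S}x_c - \sum_{i\le\abs{S}}\alpha_{m-i+1}$) is required.
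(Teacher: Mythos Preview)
The paper does not prove this proposition at all: it is stated as a classical result with a citation to Rado and is used as a black box in the subsequent lemmas. Your instinct that ``in practice [one] would cite it directly'' is precisely what the authors do, so there is nothing to compare against.

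That said, your sketch of the forward inclusion is fine, and your outline of the reverse inclusion via the vertex argument (tight facets form a chain, forced by an uncrossing step, pinning down the coordinates to a permutation of $\alpha$) is the standard route and would work if fleshed out. The only place your write-up wobbles is the aborted peeling argument in the middle paragraph, where you correctly notice mid-sentence that bounding $x_{j_m}$ by $\alpha_{m-1}$ is not what is needed; if you keep that approach you should drop the false start and go straight to the convex-combination recursion. For the purposes of this paper, though, a citation suffices.
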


As each inequality of Proposition~\ref{PROP:PermFacets} define a facet of the
permutahedron, any direct formulation of the permutahedron over $n$ elements
requires $2^n-2$ inequalities. Goemans~\cite{MP:Goemans2015} gave an
asymptotically optimal extended formulation for the permutahedron, using
$O(n\log n)$ additional constraints and variables. This allows a compact
representation, which allows us to use \eps-permutahedra~\cite{EC:Sorensen12}
as building blocks for our strategy constraints.

The \eps-permutahedron defined in~\cite{EC:Sorensen12} is a
permutahedron over the vector $(1,\eps,\eps^2,\ldots,\eps^{m-1})$,
normalized to sum to~1. We need to generalize this, so that it can sum
to any value $\rho$, and in a way that does not require normalization.
In the following, we will abuse notation slightly, and use $\rho$
without subscript as a real number, since it will shortly be replaced
by a realization weight for each specific information set.

\begin{definition}[$\eps$-Permutahedron]
  For real $\rho>0$, integers $k \geq 0$ and $m\geq 1$, and $\eps>0$
  such that $\rho\geq \eps^k$, define the vector
  $p_\eps(\rho,k,m) \in \RR^m$ by
  \[
    (p_\eps(\rho,k,m))_i = \begin{cases} \rho - (\eps^{k+1}+\dots+\eps^{k+m-1}) & , i=1\\ \eps^{k+i-1} & , i>1  \end{cases}
    \enspace ,
  \]
  and define the $\eps$-permutahedron  $\Pi_\eps(\rho,k,m)=\perm{p_\eps(\rho,k,m)} \subseteq \RR^m$.
\end{definition}
We shall be viewing $\eps$ as a variable. Note that, by definition,
$\norm{p_\eps(\rho,k,m)}_1=\rho$.

\begin{lemma}
  Assume $0<\eps\leq 1/3$ and $\rho \geq \eps^k$, for a given integer~$k\geq 0$. Then for every $1 \leq i < m$ we have
  $(p_\eps(\rho,k,m))_i \geq (p_\eps(\rho,k,m))_{i+1} / (2\eps)$.
  \label{LEM:p_eps_ratio}
\end{lemma}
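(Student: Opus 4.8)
The plan is to verify the inequality separately for the ``interior'' indices $i \geq 2$ and for the boundary index $i = 1$, since the first coordinate of $p_\eps(\rho,k,m)$ has a different shape from the rest.

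For $2 \leq i < m$ the claim is essentially free. By definition $(p_\eps(\rho,k,m))_i = \eps^{k+i-1}$ and $(p_\eps(\rho,k,m))_{i+1} = \eps^{k+i}$, so $(p_\eps(\rho,k,m))_i = (p_\eps(\rho,k,m))_{i+1}/\eps$, and since $\eps>0$ this is at least $(p_\eps(\rho,k,m))_{i+1}/(2\eps)$. No hypothesis on $\rho$ or on the size of $\eps$ is needed in this case.

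The only case requiring an estimate is $i = 1$ (which presupposes $m \geq 2$). Here $(p_\eps(\rho,k,m))_2 = \eps^{k+1}$, so the desired inequality reads $(p_\eps(\rho,k,m))_1 \geq \eps^{k+1}/(2\eps) = \eps^{k}/2$. I would bound the subtracted tail by a geometric series, $\eps^{k+1} + \dots + \eps^{k+m-1} = \eps^{k+1}\sum_{j=0}^{m-2}\eps^{j} \leq \eps^{k+1}/(1-\eps) \leq \tfrac{3}{2}\eps^{k+1}$, using $\eps \leq 1/3$ in the last step. Combining this with the hypothesis $\rho \geq \eps^{k}$ gives $(p_\eps(\rho,k,m))_1 = \rho - (\eps^{k+1}+\dots+\eps^{k+m-1}) \geq \eps^{k} - \tfrac{3}{2}\eps^{k+1} = \eps^{k}(1 - \tfrac{3}{2}\eps) \geq \tfrac{1}{2}\eps^{k}$, again using $\eps \leq 1/3$, which is exactly the bound wanted.

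There is no genuine obstacle here; the one point to watch is that the bound $1/3$ on $\eps$ is precisely what makes both of its uses --- bounding $1/(1-\eps)$ by $3/2$ and bounding $1 - \tfrac{3}{2}\eps$ from below by $1/2$ --- deliver the factor $2$ in the statement, so the constants must be tracked but no estimate is delicate. As a sanity check, the hypothesis $\rho \geq \eps^{k}$ enters only to keep $(p_\eps(\rho,k,m))_1$ large enough in the case $i=1$ and is irrelevant for $i \geq 2$.
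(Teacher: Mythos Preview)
Your proof is correct and follows essentially the same approach as the paper: the case $i\geq 2$ is immediate from the definition, and for $i=1$ both you and the paper bound the geometric tail by $\eps^{k+1}/(1-\eps)$, use $\rho\geq\eps^k$, and then invoke $\eps\leq 1/3$ to obtain the factor~$2$. The only difference is cosmetic algebra in how the final inequality is rearranged.
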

\begin{proof}
  The statement clearly holds for $i>1$. Next we see that 
  $(p_\eps(\rho,k,m))_1=\rho-\eps^{k+1}(1-\eps^{m-1})/(1-\eps) \geq \eps^k -\eps^{k+1}/(1-\eps) = (1/\eps-1/(1-\eps))\eps^{k+1}\geq \eps^{k+1}/(2\eps)=(p_\eps(\rho,k,m))_{2} / (2\eps)$.
\end{proof}

We are now ready to define the perturbed game $\Gamma_\eps$.
\begin{definition}[Strategy constraints]\label{DEF:StratCons}
  For each player $i$, and each information set $h\in H_i$, let
  $k_h=\sum_{h'<h} m_{h'}$ be the sum of the sizes of the action sets
  at information sets visited by Player~$i$ before reaching
  information set $h$. Now, in the perturbed game $\Gamma_\eps$,
  restrict
  $(\rho_{b_i}(c_1),\rho_{b_i}(c_2),\ldots,\rho_{b_i}(c_{m_h}))$ to be
  in $\Pi_\eps(\rho_{b_i}(h),k_h,m_h)$.
\end{definition}
Notice that the strategy constraints for the first information set a
player visits is identical to the strategy constraints for proper
equilibria of bimatrix games.

The next three lemmas describe several ways we may modify coordinates of points of
$\Pi_\eps(\rho,k,m)$ while staying within $\Pi_\eps(\rho',k,m)$ for appropriate
$\rho'$. These are needed for the proof of our main technical result,
Proposition~\ref{PROP:2P}, below.
\begin{lemma}
  Let $0<\eps<1/3$, $\rho \geq \eps^k$, and
  $x \in \Pi_\eps(\rho,k,m)$. Suppose for distinct $c$ and $c'$ we
  have $x_c > 2\eps x_{c'}$. Then there exists $\delta>0$ such that
  $x+\delta (e_{c'}-e_c) \in \Pi_\eps(\rho,k,m)$ (here as usual $e_i$
  denotes the $i$-unit vector).
  \label{LEM:ModPerm1}
\end{lemma}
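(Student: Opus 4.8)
The plan is to deduce everything from Rado's facet description of the permutahedron, Proposition~\ref{PROP:PermFacets}. Write $p := p_\eps(\rho,k,m)$. Since $0<\eps<1/3$ and $\rho\geq\eps^k$, Lemma~\ref{LEM:p_eps_ratio} gives $(p)_i \geq (p)_{i+1}/(2\eps) > (p)_{i+1}$ for every $1\leq i<m$ (as $2\eps<1$), so the coordinates of $p$ are strictly decreasing and Proposition~\ref{PROP:PermFacets} applies with $\alpha = p$: a point $y$ lies in $\Pi_\eps(\rho,k,m)$ if and only if $\sum_{d\in[m]} y_d = \sum_{d\in[m]} p_d$ and the \emph{Rado inequality} $\sum_{d\in S} y_d \geq \sum_{i=1}^{\abs{S}} p_{m-i+1}$ holds for every $S$ with $\emptyset\neq S\subsetneq[m]$. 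I would apply this to $y := x + \delta(e_{c'}-e_c)$ for a suitable $\delta>0$. The equality constraint holds automatically, since $e_{c'}-e_c$ has zero coordinate sum. For the Rado inequality of a set $S$, its left-hand side is unchanged when $\{c,c'\}\subseteq S$ or $\{c,c'\}\cap S=\emptyset$, it increases by $\delta$ when $c'\in S\not\ni c$, and it decreases by $\delta$ only when $c\in S\not\ni c'$. Hence the perturbation can destroy feasibility only through a set $S$ with $c\in S\not\ni c'$ whose Rado inequality is tight at $x$; if no such tight set exists, then the finitely many slacks $s_S := \sum_{d\in S}x_d - \sum_{i=1}^{\abs{S}}p_{m-i+1}$ over sets $S$ with $c\in S\not\ni c'$ are all strictly positive, and any $0<\delta\leq\min_S s_S$ does the job.

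So the crux will be to show that the hypothesis $x_c>2\eps x_{c'}$ rules out a tight set $S$ with $c\in S\not\ni c'$. Suppose such an $S$ were tight, with $j:=\abs{S}\in\{1,\dots,m-1\}$, i.e.\ $\sum_{d\in S}x_d = \sum_{i=1}^{j}p_{m-i+1}$. Combining this equality with the Rado inequality for $S\setminus\{c\}$ (using the trivial bound $0\geq0$ when $j=1$) gives $x_c = \sum_{d\in S}x_d - \sum_{d\in S\setminus\{c\}}x_d \leq p_{m-j+1}$. Combining the same equality with the constraint for $S\cup\{c'\}$ --- the Rado inequality if $S\cup\{c'\}\neq[m]$, and otherwise the equality $\sum_{d\in[m]}y_d=\sum_{d\in[m]}p_d$, both of which give $\sum_{d\in S\cup\{c'\}}x_d\geq\sum_{i=1}^{j+1}p_{m-i+1}$ --- gives $x_{c'} = \sum_{d\in S\cup\{c'\}}x_d - \sum_{d\in S}x_d \geq p_{m-j}$. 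Finally, Lemma~\ref{LEM:p_eps_ratio} applied with index $i=m-j$ (valid because $1\leq m-j\leq m-1$) gives $p_{m-j+1}\leq 2\eps\,p_{m-j}$, so $x_c\leq p_{m-j+1}\leq 2\eps\,p_{m-j}\leq 2\eps\,x_{c'}$, contradicting $x_c>2\eps x_{c'}$.

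The main obstacle --- and the only step that is not routine bookkeeping --- will be this last claim. The point is to notice that telescoping the Rado inequalities across the chain $S\setminus\{c\}\subset S\subset S\cup\{c'\}$ bounds $x_c$ from above and $x_{c'}$ from below by the two \emph{adjacent} coordinates $p_{m-j+1}$ and $p_{m-j}$ of $p$, at which point the ratio estimate of Lemma~\ref{LEM:p_eps_ratio} is exactly strong enough to close the argument --- and this is precisely where the constant $2\eps$ in the hypothesis comes from. The degenerate cases $j=1$ (where $S\setminus\{c\}=\emptyset$) and $S\cup\{c'\}=[m]$ need no separate treatment: in each, the missing Rado inequality is replaced by a trivially valid (in)equality giving the same bound.
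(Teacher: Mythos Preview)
Your argument is correct. You work with Rado's facet description (Proposition~\ref{PROP:PermFacets}): the move $x\mapsto x+\delta(e_{c'}-e_c)$ can only violate an inequality indexed by a set $S$ with $c\in S\not\ni c'$, and you rule out any such $S$ being tight by telescoping the Rado constraints along $S\setminus\{c\}\subset S\subset S\cup\{c'\}$, obtaining $x_c\le p_{m-j+1}$ and $x_{c'}\ge p_{m-j}$, and then invoking Lemma~\ref{LEM:p_eps_ratio}. The boundary cases $j=1$ and $S\cup\{c'\}=[m]$ are handled correctly.

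The paper takes the dual route, via the \emph{vertex} description of the permutahedron rather than its facets. It writes $x=\sum_\pi w_\pi\,\pi(p)$ as a convex combination of corner points, observes that if every $\pi$ with $w_\pi>0$ had $\pi^{-1}(c)>\pi^{-1}(c')$ then each summand would satisfy $(\pi(p))_c\le 2\eps\,(\pi(p))_{c'}$ by Lemma~\ref{LEM:p_eps_ratio}, forcing $x_c\le 2\eps x_{c'}$; hence some $\pi$ with $w_\pi>0$ has $\pi^{-1}(c)<\pi^{-1}(c')$, and swapping $c,c'$ in that corner yields a point of the form $x+\delta(e_{c'}-e_c)$ still inside the polytope. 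The paper's argument is a couple of lines shorter and produces $\delta$ directly as $w_\pi\big(p_{\pi^{-1}(c)}-p_{\pi^{-1}(c')}\big)$; your argument stays entirely within the inequality description, which is closer in spirit to how the permutahedron is actually used later (via Goemans' extended formulation), and yields $\delta$ as the minimum slack over the relevant facets. Both proofs hinge on exactly the same use of Lemma~\ref{LEM:p_eps_ratio}, just read through the two halves of the polytope duality.
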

\begin{proof}
  By definition of $\Pi_\eps(\rho,k,m)$ we may write $x$ as a convex
  combination of the corner points of $\Pi_\eps(\rho,k,m)$,
  $x = \sum_{\pi \in S_m} w_\pi \pi(p_\eps(\rho,k,m))$, where
  $w_\pi\geq 0$ and $\sum_{\pi \in S_m} w_\pi = 1$. There must exist a
  permutation $\pi$ such that $w_\pi>0$ and
  $\pi^{-1}(c)<\pi^{-1}(c')$, since otherwise $x_c \leq 2\eps x_{c'}$
  by Lemma~\ref{LEM:p_eps_ratio}. Let $\pi' \in S_m$ such that
  $\pi'(\pi^{-1}(c))=c'$, $\pi'(\pi^{-1}(c'))=c$, and $\pi'(i)=\pi(i)$
  when $\pi(i)\notin \{c,c'\}$. We then have that
  \[
    x'=x+w_\pi (\pi'(p_\eps(\rho,k,m)) - \pi(p_\eps(\rho,k,m))) \in
    \Pi_\eps(\rho,k,m) \enspace .
  \]
  Note now that
  $\pi'(p_\eps(\rho,k,m)) - \pi(p_\eps(\rho,k,m))$ is equal to
  \[
    ((p_\eps(\rho,k,m))_{\pi^{-1}(c)} - (p_\eps(\rho,k,m))_{\pi^{-1}(c')})(e_{c'}-e_c) \enspace .
  \]
  Since
  $(p_\eps(\rho,k,m))_{\pi^{-1}(c)} > (p_\eps(\rho,k,m))_{\pi^{-1}(c')}$, the
  statement follows.  \end{proof}

\begin{lemma}
  Let $x \in \Pi_\eps(\rho,k,m)$ where $\rho\geq \eps^k$. Then
  $x+\delta e_c \in \Pi(\rho+\delta,k,m)$ for any $\delta>0$ and $c$.
  \label{LEM:ModPerm2}
\end{lemma}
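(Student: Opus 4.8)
The plan is to use Rado's facet description of the permutahedron (Proposition~\ref{PROP:PermFacets}) and verify that adding $\delta e_c$ to a point of $\Pi_\eps(\rho,k,m)$ keeps all the defining inequalities of $\Pi_\eps(\rho+\delta,k,m)$ satisfied. First I observe that the coordinates of $p_\eps(\rho+\delta,k,m)$ and $p_\eps(\rho,k,m)$ agree in every position except the first, where $(p_\eps(\rho+\delta,k,m))_1 = (p_\eps(\rho,k,m))_1 + \delta$; in particular the sorted order of the coordinates is unchanged (the first coordinate remains the largest, by the estimate in the proof of Lemma~\ref{LEM:p_eps_ratio}), so the $2^m-2$ lower-bound constants $\sum_{i=1}^{\abs S}\alpha_{m-i+1}$ appearing in Rado's description are \emph{identical} for the two $\eps$-permutahedra. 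Only the total-sum equality changes, from $\rho$ to $\rho+\delta$.

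Now let $x \in \Pi_\eps(\rho,k,m)$. For the equality constraint, $\sum_i (x+\delta e_c)_i = \rho + \delta = \norm{p_\eps(\rho+\delta,k,m)}_1$, as required. For each subset $S \notin \{\emptyset,[m]\}$: if $c \in S$ then $\sum_{j \in S}(x+\delta e_c)_j = \delta + \sum_{j\in S} x_j \geq \sum_{j\in S} x_j \geq \sum_{i=1}^{\abs S}\alpha_{m-i+1}$, using $\delta>0$ and the fact that $x$ satisfies this facet inequality for $\Pi_\eps(\rho,k,m)$, whose right-hand side is the same constant. If $c \notin S$ then $\sum_{j\in S}(x+\delta e_c)_j = \sum_{j\in S} x_j \geq \sum_{i=1}^{\abs S}\alpha_{m-i+1}$ directly. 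Hence $x+\delta e_c$ satisfies every inequality and the equality of Rado's description of $\Pi_\eps(\rho+\delta,k,m)$, so it lies in that set.

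There is essentially no obstacle here: the only thing to be a little careful about is that Rado's proposition is stated for a vector with \emph{strictly} decreasing, hence distinct, coordinates, so I should note (as in Lemma~\ref{LEM:p_eps_ratio}) that $p_\eps(\rho,k,m)$ and $p_\eps(\rho+\delta,k,m)$ both have this property — the first coordinate strictly dominates $\eps^{k+1} > \eps^{k+2} > \cdots$ — which is what licenses using the same sorted constants and guarantees the facet description applies. (I am also implicitly using $\rho + \delta \geq \rho \geq \eps^k$, so $\Pi_\eps(\rho+\delta,k,m)$ is well-defined.) Everything else is a one-line check against the two cases $c\in S$ and $c\notin S$.
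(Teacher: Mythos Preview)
Your proof is correct and follows essentially the same approach as the paper: both use Rado's facet description (Proposition~\ref{PROP:PermFacets}) and observe that the facet inequalities of $\Pi_\eps(\rho,k,m)$ and $\Pi_\eps(\rho+\delta,k,m)$ are identical, so only the equality constraint needs checking. You have simply spelled out in detail (the two cases $c\in S$ and $c\notin S$, and the observation that only the largest coordinate of $p_\eps$ changes) what the paper compresses into a single sentence.
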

\begin{proof}
  This follows immediately from Proposition~\ref{PROP:PermFacets}
  since  the inequalities defining the facets of
  $\Pi_\eps(\rho,k,m)$ and $\Pi_\eps(\rho+\delta,k,m)$ are exactly the
  same.
\end{proof}

\begin{lemma}
  Let $x \in \Pi_\eps(\rho,k,m)$ where $0<\eps\leq 1/2$ and
  $\rho > \max(\eps^k,2m\eps^{k+1})$. Let $c$ be such that
  $x_c\geq x_{c'}$ for all $c'$. Then
  $x-\delta e_c \in \Pi_\eps(\rho-\delta,k,m)$ for any
  $\delta \leq \min(\rho-\eps^k,\rho/m-2\eps^{k+1})$.
  \label{LEM:ModPerm3}
\end{lemma}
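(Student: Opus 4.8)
The plan is to verify $x-\delta e_c \in \Pi_\eps(\rho-\delta,k,m) = \perm{p_\eps(\rho-\delta,k,m)}$ directly from Rado's facet description (Proposition~\ref{PROP:PermFacets}), in the same spirit as the proof of Lemma~\ref{LEM:ModPerm2} but tracking an inequality rather than an equality. Write $p = p_\eps(\rho,k,m)$ and $p' = p_\eps(\rho-\delta,k,m)$, and observe that $p' = p - \delta e_1$: the two vectors agree except in the first coordinate, which drops by $\delta$. The case $m=1$ is immediate, since then $\Pi_\eps(\rho,k,1)=\{(\rho)\}$ and $c=1$, so assume $m\ge 2$.

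First I would check that both $p$ and $p'$ have strictly decreasing coordinates, so that Proposition~\ref{PROP:PermFacets} applies to $\perm p$ and $\perm{p'}$, and moreover that in each the first coordinate remains the largest. The coordinates $p_2 > \dots > p_m$ are $\eps^{k+1} > \dots > \eps^{k+m-1}$, and since $\eps\le 1/2$ we have $\eps^{k+1}+\dots+\eps^{k+m-1} < 2\eps^{k+1}$; combined with $\rho > 2m\eps^{k+1}\ge 4\eps^{k+1}$ this gives $p_1 = \rho - (\eps^{k+1}+\dots+\eps^{k+m-1}) > 2\eps^{k+1} > \eps^{k+1} = p_2$. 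For $p'$ I would use $\delta \le \rho/m - 2\eps^{k+1}$ and $\rho > 2m\eps^{k+1}$ to get $\rho - \delta \ge \rho(1-\tfrac{1}{m}) + 2\eps^{k+1} > 4\eps^{k+1}$, hence $p'_1 > 2\eps^{k+1} > \eps^{k+1} = p'_2$ as well. A consequence is that for every $S$ with $\emptyset\ne S\subsetneq[m]$, the $|S|$ smallest coordinates of $p$ and of $p'$ coincide, summing to $\sum_{i=1}^{|S|}\eps^{k+m-i}$.

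Next I would check the defining conditions of $\perm{p'}$ for $y := x-\delta e_c$. The sum condition is immediate: $\sum_i y_i = \rho - \delta = \norm{p'}_1$. For a facet inequality indexed by $S$ with $\emptyset\ne S\subsetneq[m]$, I split on whether $c\in S$. If $c\notin S$, then $\sum_{j\in S}y_j = \sum_{j\in S}x_j \ge \sum_{i=1}^{|S|}\eps^{k+m-i}$ because $x\in\perm p$, and that is exactly the bound required for $p'$. If $c\in S$, I apply the facet inequality for $x\in\perm p$ to the set $S\setminus\{c\}$ (which is never full, and for which the bound is read as $0$ when it is empty), obtaining $\sum_{j\in S\setminus\{c\}}x_j \ge \sum_{i=1}^{|S|-1}\eps^{k+m-i}$, so that $\sum_{j\in S}y_j = x_c - \delta + \sum_{j\in S\setminus\{c\}}x_j \ge x_c - \delta + \sum_{i=1}^{|S|-1}\eps^{k+m-i}$; it then suffices that $x_c - \delta \ge \eps^{k+m-|S|}$. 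This is where the hypotheses are used: $x_c$ is the largest coordinate of $x$, so $x_c \ge \tfrac{1}{m}\sum_j x_j = \rho/m$, whence $x_c - \delta \ge \rho/m - (\rho/m - 2\eps^{k+1}) = 2\eps^{k+1} \ge \eps^{k+1} \ge \eps^{k+m-|S|}$, using $|S|\le m-1$. The same computation gives $x_c - \delta \ge 2\eps^{k+1} > 0$, so $y\ge 0$, and by Proposition~\ref{PROP:PermFacets} we conclude $y\in\perm{p'}=\Pi_\eps(\rho-\delta,k,m)$.

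The one genuinely delicate point is the case $c\in S$: we are decreasing a coordinate, which tightens the facet inequality, and the only slack available is that $c$ carries the maximum of $x$ — hence at least the average $\rho/m$ — together with the precise removal budget $\delta \le \rho/m - 2\eps^{k+1}$. Everything else is bookkeeping: confirming which $|S|$ coordinates of $p$ and $p'$ are smallest (for which it matters that the first coordinate stays largest, forcing the use of $\rho > 2m\eps^{k+1}$ and the lower bound on $\rho-\delta$), and dispatching the trivial cases $m=1$ and $S=\{c\}$.
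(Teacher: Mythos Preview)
Your proof is correct and follows essentially the same approach as the paper: verify membership in $\Pi_\eps(\rho-\delta,k,m)$ directly via Rado's facet description, splitting on whether $c\in S$, and using $x_c\ge \rho/m$ together with $\delta\le \rho/m-2\eps^{k+1}$ for the $c\in S$ case. The only minor difference is that for $c\in S$ the paper bounds $\sum_{j\in S\setminus\{c\}}x_j\ge 0$ and uses the stronger estimate $x_c-\delta\ge \eps^{k+1}+\cdots+\eps^{k+m-1}$ (which already dominates the full facet bound for any $|S|\le m-1$), whereas you invoke the facet inequality for $S\setminus\{c\}$ and only need $x_c-\delta\ge \eps^{k+m-|S|}$; your extra verification that $p'_1>p'_2$ is something the paper leaves implicit.
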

\begin{proof}
  Since $\delta \leq \rho - \eps^k$ we have $\rho-\delta \geq \eps^k$,
  thereby satisfying the definition of $\Pi_\eps(\rho-\delta,k,m)$.  By the
  choice of $c$ we have that $x_c\geq \rho/m$. Since we also have
  $\delta \leq \rho/m-2\eps^{k+1}$ it follows that
  $x_c - \delta \geq 2\eps^{k+1}$. Thus
  $x_c-\delta \geq \eps^{k+1}+\dots+\eps^{k+m-1}$. It then follows
  immediately from Proposition~\ref{PROP:PermFacets} that
  $x-\delta e_c \in \Pi_\eps(\rho-\delta,k,m)$, since any inequality
  given by $S$ with $c \in S$ is trivially satisfied, and any
  inequality with $c \notin S$ is unchanged from $\Pi_\eps(\rho,k,m)$.
\end{proof}
We are now in position to prove correctness of our approach.
\begin{proposition}\label{PROP:2P}
  Any Nash equilibrium of $\Gamma_\eps$ is a $2\eps$-quasi-proper
  equilibrium of $\Gamma$, for any sufficiently small $\eps>0$.
\end{proposition}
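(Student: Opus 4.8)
Throughout I would take $\eps$ small enough for the auxiliary lemmas to apply, say $\eps<1/3$ and $2m_h\eps^2<1$ for every information set $h$. The plan is to argue by contradiction: supposing $b$ is a Nash equilibrium of $\Gamma_\eps$ that violates the $2\eps$-quasi-properness condition at some information set, I would produce a deviation feasible in $\Gamma_\eps$ that strictly increases the deviating player's payoff. First, though, every strategy feasible in $\Gamma_\eps$ is fully mixed: by Proposition~\ref{PROP:PermFacets} applied to singleton sets $S$, every coordinate of a point of $\Pi_\eps(\rho,k,m)$ is at least $\eps^{k+m-1}>0$, so an induction down the tree shows that, under the constraints of Definition~\ref{DEF:StratCons}, every information set $h\in H_i$ receives a realization weight $\rho_{b_i}(h)\ge\eps^{k_h-1}\ge\eps^{k_h}$; hence all the $\eps$-permutahedra involved are well defined, all realization weights $\rho_{b_i}(c)$ are strictly positive, and so are all behavior probabilities $b_{ih}(c)=\rho_{b_i}(c)/\rho_{b_i}(h)$. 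This positivity also lets me pass freely between realization plans and behavior strategies in what follows.

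For the quasi-properness condition I would assume, towards a contradiction, that for some Player~$i$, some information set $h\in H_i$ with $\rho_b(h)>0$, and some $c,c'\in C_h$ we have $K_i^{h,c}(b)<K_i^{h,c'}(b)$ but $\rho_{b_i}(c)>2\eps\,\rho_{b_i}(c')$ (equivalently $b_{ih}(c)>2\eps\,b_{ih}(c')$). I would then build a deviation $b'_i$ through its realization plan $r'$, obtained from $r=\rho_{b_i}$ by moving a small amount $\delta>0$ of realization weight from $c$ to $c'$ at $h$ and then repairing the consequences below $c$ and below $c'$. At $h$, Lemma~\ref{LEM:ModPerm1}, whose hypothesis is precisely our assumption $\rho_{b_i}(c)>2\eps\,\rho_{b_i}(c')$, keeps the $h$-block inside $\Pi_\eps(\rho_{b_i}(h),k_h,m_h)$ for all sufficiently small $\delta$. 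Adding $\delta$ to $\rho_{b_i}(c')$ forces the information sets below $c'$ to absorb $\delta$ more, so I would fix a pure continuation $\beta^{*}$ below $c'$ attaining $K_i^{h,c'}(b)$ and add $\delta$ to the coordinate selected by $\beta^{*}$ at every information set reached by this repair, which Lemma~\ref{LEM:ModPerm2} always permits. Symmetrically, removing $\delta$ from $\rho_{b_i}(c)$ forces the information sets below $c$ to shed $\delta$, so I would subtract $\delta$ from the largest coordinate at every information set reached, using Lemma~\ref{LEM:ModPerm3}; its hypothesis $\rho>\max(\eps^k,2m\eps^{k+1})$ holds because each such $\rho$ is a realization weight that is $\ge\eps^{k-1}$ by the first paragraph and $2m_h\eps^2<1$, while its bound on $\delta$ merely forces $\delta$ to be small. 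Since the tree is finite, one $\delta>0$ works everywhere. The net effect is that $r'$ differs from $r$ only on $\{c,c'\}$ and on two disjoint skeletons, each selecting a single action at every information set it reaches: $T'$ below $c'$, the realization-plan support of $\beta^{*}$, and $T$ below $c$, the support of the pure strategy $\beta_T$ that always plays the largest-weight action; there $r'-r$ equals $+\delta$ on $\{c'\}\cup T'$ and $-\delta$ on $\{c\}\cup T$. By construction $r'$ is feasible in $\Gamma_\eps$ and strictly positive, hence the realization plan of a genuine behavior strategy $b'_i$.

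To conclude I would compute the payoff change. Holding $b_{-i}$ fixed, the sequence form makes $U_i$ affine in Player~$i$'s realization plan: there are coefficients $\nu_i(d)$, independent of Player~$i$'s strategy, with $U_i(b)=\sum_d\rho_{b_i}(d)\,\nu_i(d)$ and $U_i(b/b'_i)=\sum_d\rho_{b'_i}(d)\,\nu_i(d)$, so $U_i(b/b'_i)-U_i(b)=\delta\bigl(\sum_{d\in\{c'\}\cup T'}\nu_i(d)-\sum_{d\in\{c\}\cup T}\nu_i(d)\bigr)$. Unwinding the definitions of $\nu_i$ and of $U_{ih}$, one verifies that for any $a\in C_h$ and any pure continuation $\beta$ below $a$ with support-skeleton $T_\beta$, $\sum_{d\in\{a\}\cup T_\beta}\nu_i(d)=\frac{\rho_b(h)}{\rho_{b_i}(h)}\,U_{ih}(b\subslash{h}\beta/a)$. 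Applied to $\beta^{*}$ this gives $\sum_{d\in\{c'\}\cup T'}\nu_i(d)=\frac{\rho_b(h)}{\rho_{b_i}(h)}K_i^{h,c'}(b)$, and applied to $\beta_T$ it gives $\sum_{d\in\{c\}\cup T}\nu_i(d)=\frac{\rho_b(h)}{\rho_{b_i}(h)}U_{ih}(b\subslash{h}\beta_T/c)\le\frac{\rho_b(h)}{\rho_{b_i}(h)}K_i^{h,c}(b)$, the inequality holding since $\beta_T$ is one admissible choice in the maximum defining $K_i^{h,c}(b)$. Hence $U_i(b/b'_i)-U_i(b)\ge\delta\,\frac{\rho_b(h)}{\rho_{b_i}(h)}\bigl(K_i^{h,c'}(b)-K_i^{h,c}(b)\bigr)>0$, because $\rho_b(h),\rho_{b_i}(h)>0$ and $K_i^{h,c}(b)<K_i^{h,c'}(b)$. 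This contradicts $b$ being a Nash equilibrium of $\Gamma_\eps$, so no such $i,h,c,c'$ exist and $b$ is $2\eps$-quasi-proper.

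I expect the main obstacle to be the last paragraph's identity $\sum_{d\in\{a\}\cup T_\beta}\nu_i(d)=\frac{\rho_b(h)}{\rho_{b_i}(h)}U_{ih}(b\subslash{h}\beta/a)$, which rests on the combinatorial fact that the realization-plan support of a pure strategy below $c'$ selects exactly one action at each information set it reaches, so that it is simultaneously the shape produced by the $+\delta$-repair (via Lemma~\ref{LEM:ModPerm2}) and the skeleton of a strategy whose conditional value is at most $K_i^{h,c'}(b)$; the remaining care lies in matching which ``$\eps$ small enough'' hypothesis is used at each application of Lemmas~\ref{LEM:ModPerm1} and~\ref{LEM:ModPerm3}. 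I would also note that nothing here uses that $\Gamma$ has two players: affineness of $U_i$ in a single player's realization plan holds for any number of players, which is precisely why this perturbation also yields a proof of existence of quasi-proper equilibria for general $n$-player games.
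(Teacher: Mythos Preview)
Your proof is correct and follows essentially the same approach as the paper: both construct the deviation $x_i+\delta(x'_i-x^*_i)$ by applying Lemma~\ref{LEM:ModPerm1} at $h$, Lemma~\ref{LEM:ModPerm2} below $c'$ along a pure continuation attaining $K_i^{h,c'}(b)$, and Lemma~\ref{LEM:ModPerm3} below $c$ along the max-weight path, then use affinity of $U_i$ in Player~$i$'s realization plan to reach the contradiction (the paper phrases it as the contrapositive and writes the linearity step as $U_i(b/\widetilde b_i)=U_i(b)+\delta\bigl(U_i(b\subslash{h}b'_i/c')-U_i(b\subslash{h}b^*_i/c)\bigr)$ rather than via your $\nu_i$-identity). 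Your only slip is the claim $\rho_{b_i}(h)\ge\eps^{k_h-1}$, which fails at a first information set (where $k_h=0$ and $\rho_{b_i}(h)=1<\eps^{-1}$), but this is harmless since Lemma~\ref{LEM:ModPerm3} is invoked only at information sets strictly below~$c$, where your bound does hold.
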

\begin{proof}
  Let $b$ be a Nash equilibrium of $\Gamma_\eps$. Consider 
  Player~$i$ for any~$i$, any information set $h \in H_i$, and let $c,c' \in h$ be
  such that $b_{ih}(c) > 2\eps b_{ih}(c')$. We are then to show that
  $K_i^{h,c}(b) \geq K_i^{h,c'}(b)$, when $\eps>0$ is sufficiently
  small. Let $b'_i$ be such that
  $U_{ih}(b\subslash{h}b'_i/c')=K_i^{h,c'}(b)$. We may assume that $b'_i$
  is a pure behavior strategy thereby making $b/b'_i$ a $h$-local
  purification. Let $H_{i,c'}$ be the set of those information sets of
  Player~$i$ that follow after $h$ when taking action $c'$ in
  $h$. Similarly, let $H_{i,c}$ be the set of those information sets
  of Player~$i$ that follow after taking action $c$ in $h$. Note that
  by perfect recall of $\Gamma$ we have that
  $H_{i,c'} \cap H_{i,c} = \emptyset$. Let $b^*_i$ be any pure
  behavior strategy of Player~$i$ choosing $c^*_h \in C_h$ maximizing
  $b_{ih}(c^*_h)$, for all $h \in H_i$. We claim that
  $U_{ih}(b\subslash{h}b^*_i/c)\geq K_i^{h,c'}(b)$ for all
  sufficiently small $\eps>0$.

  Let $x_i$ be the realization plan given by $b_i$, let $x'$ be the
  realization plan given by $b_i\subslash{h}b'_i/c'$, and let $x^*_i$
  be the realization plan given by $b_i\subslash{h}b^*_i/c$. We shall
  next apply Lemma~\ref{LEM:ModPerm1} to $h$, Lemma~\ref{LEM:ModPerm2}
  to all $h' \in H_{i,c'}$, and Lemma~\ref{LEM:ModPerm3} to all
  $h^* \in H_{i,c}$ assigned positive realization weight by
  $b_i\subslash{h}b^*_i/c$, to obtain that for all sufficiently small
  $\eps>0$ there is $\delta>0$ such that
  $\widetilde{x}_i=x_i+\delta(x'_i-x^*_i)$ is a valid realization plan
  of~$\Gamma_\eps$.

  Lemma~\ref{LEM:ModPerm2} can be applied whenever $\eps>0$ is
  sufficiently small, whereas Lemma~\ref{LEM:ModPerm1} in addition
  makes use of the assumption that $b_{ih}(c)>2\eps b_{ih}(c')$. To
  apply Lemma~\ref{LEM:ModPerm3}, we need to prove that the player's
  realization weight is sufficiently large for the relevant
  information sets, specifically $\rho_{h'}>\eps^{k_{h'}}$ for each
  relevant information set $h'$. Since $b^*_i$ is pure, Player $i$'s
  realization weight, $\rho_{h'}$ for each information set $h'$ in
  $H_{i,c}$ is either 0 or $\rho_c$. Since
  $b_{ih}(c) > 2\eps b_{ih}(c')$, we have that
  $\rho_c > \eps^{k_h+|C_h|-1}\geq\eps^{k_{h'}}$ as needed.
  
  Thus, consider $\eps>0$ and $\delta>0$ such that $\widetilde{x}_i$
  is a valid realization plan and let $\widetilde{b}_i$ be the
  corresponding behavior strategy. Since $b$ is a Nash equilibrium we
  have $U_i(b/\,\widetilde{b}_i)\leq U_i(b)$. But
  $U_i(b/\,\widetilde{b}_i) = U_i(b) +
  \delta(U_i(b\subslash{h}b'_i/c')-U_i(b\subslash{h}b^*_i/c))$. It
  follows that
  $\delta(U_i(b\subslash{h}b'_i/c')-U_i(b\subslash{h}b^*_i/c))\leq 0$,
  and since $\delta>0$ we have
  $U_i(b\subslash{h}b^*_i/c)\geq
  U_i(b\subslash{h}b'_i/c')$. Equivalently,
  $U_{ih}(b\subslash{h}b^*_i/c)\geq U_{ih}(b\subslash{h}b'_i/c')$,
  which was to be proved.  Since $i$ and $h \in H_i$ were arbitrary,
  it follows that $b$ is a $2\eps$-quasi-proper equilibrium in
  $\Gamma$, for any sufficient small $\eps>0$.  \end{proof}

\begin{theorem}
  A symbolic \eps-quasi-proper equilibrium for a given two-player
  extensive form game with perfect recall can be computed by applying
  Lemke's algorithm to an LCP of polynomial size, and can be computed
  in \PPAD{}.
\end{theorem}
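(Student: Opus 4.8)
The plan is to realize the Nash equilibria of the perturbed game $\Gamma_\eps$ as the solutions of a polynomial-size linear complementarity problem (LCP) whose coefficients are polynomials in a symbolic infinitesimal $\eps$, to solve that LCP by Lemke's algorithm run over the ordered field $\RR(\eps)$, and to conclude via Proposition~\ref{PROP:2P}. Since Lemke's algorithm is a complementary-pivoting path-following method, an LCP on which it is guaranteed to terminate with a solution reduces to an \textsc{EndOfTheLine} instance of comparable size, and this will give the $\PPAD$ upper bound.

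First I would assemble the LCP. The starting point is the sequence-form linear-complementarity reduction of Koller, Megiddo and von Stengel~\cite{GEB:Koller96}, which characterizes the Nash equilibria of a two-player extensive-form game of perfect recall as the solutions of a complementarity system built from the bilinear sequence-form payoff matrices and the sequence-form constraint matrices of the two players. In $\Gamma_\eps$ the strategy space of Player~$i$ is the sequence-form polytope intersected, at each information set $h \in H_i$, with the requirement that $(\rho_{b_i}(c))_{c \in C_h}$ lie in $\Pi_\eps(\rho_{b_i}(h),k_h,m_h)$. By Proposition~\ref{PROP:PermFacets} the only facet of this $\eps$-permutahedron that involves the variable $\rho_{b_i}(h)$ is the equality $\sum_{c\in C_h}\rho_{b_i}(c)=\rho_{b_i}(h)$, which is already a sequence-form constraint, while every remaining facet inequality has a right-hand side that is a fixed sum of powers of $\eps$, independent of $\rho_{b_i}(h)$. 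Using Goemans' compact extended formulation of the permutahedron~\cite{MP:Goemans2015}, these exponentially many inequalities are replaced, per information set, by $O(m_h\log m_h)$ auxiliary variables, slack variables and linear constraints that leave the payoffs unaffected. Substituting the resulting polytopes into the Koller--Megiddo--von Stengel construction yields an LCP of size polynomial in $\Gamma$ whose coefficients are polynomials in $\eps$ of polynomially bounded degree and bit-size, and whose solutions project onto the Nash equilibria of $\Gamma_\eps$.

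Next I would solve this LCP. The strategy space of each player in $\Gamma_\eps$ is nonempty (it contains the profile whose action realization weights at each $h$ are $p_\eps(\rho_{b_i}(h),k_h,m_h)$), compact and convex, and the payoffs are multilinear, so $\Gamma_\eps$ has a Nash equilibrium and the LCP is feasible. I would then verify, by the same analysis of the LCP matrix that was carried out for quasi-perfect equilibria~\cite{ET:MiltersenSorensen10} and for proper equilibria of bimatrix and polymatrix games~\cite{EC:Sorensen12,EC:HansenL18}, that Lemke's algorithm~\cite{MS:Lemke65} with an all-positive covering vector and lexicographic resolution of degeneracy cannot terminate on a secondary ray, and hence terminates at a solution. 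Running the algorithm over $\RR(\eps)$ with $\eps$ treated as positive and infinitesimal, every tableau entry is a quotient of subdeterminants of the symbolic matrix, hence a rational function of $\eps$ of polynomially bounded degree, so the whole execution has polynomial size. The solution $(x(\eps),y(\eps))$ obtained is a genuine solution of the LCP for every sufficiently small real $\eps>0$; each of its coordinates is bounded below by a positive power of $\eps$, because the singleton facet inequalities force every coordinate of a point of $\Pi_\eps(\rho,k,m)$ to be at least $\eps^{k+m-1}$, so the induced behavior probabilities $b_{ih}(c)=\rho_{b_i}(c)/\rho_{b_i}(h)$ are well defined and positive. By Proposition~\ref{PROP:2P} the fully mixed profile thus obtained is a $2\eps$-quasi-proper equilibrium of $\Gamma$ for all sufficiently small $\eps$, that is, after replacing $\eps$ by $\eps/2$, a symbolic $\eps$-quasi-proper equilibrium; and because the LCP, the symbolic arithmetic and the pivot path all have polynomial size, the associated \textsc{EndOfTheLine} instance is polynomial, placing the problem in $\PPAD$~\cite{JCSS:Papadimitriou94,SICOMP:EtessamiY10}.

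The step I expect to be the main obstacle is proving that Lemke's algorithm does not ray-terminate on the perturbed LCP. This amounts to showing that inserting the $\eps$-permutahedron constraints into the sequence-form polytope preserves the structural properties of the LCP matrix underlying termination, and it is precisely here that the two prior perturbations --- the $\eps$-perturbation of the sequence form used for quasi-perfect equilibria~\cite{ET:MiltersenSorensen10} and the $\eps$-permutahedron used for proper equilibria of bimatrix games~\cite{EC:Sorensen12} --- must be reconciled into a single argument.
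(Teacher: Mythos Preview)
Your proposal is correct and follows essentially the same route as the paper: build the LCP from the sequence form with the $\eps$-permutahedron constraints of Definition~\ref{DEF:StratCons} via Goemans' extended formulation, observe that all $\eps$-dependence sits on the right-hand sides, run Lemke's algorithm symbolically over $\RR(\eps)$, and appeal to Proposition~\ref{PROP:2P}; like the paper, you defer the no-ray-termination argument to the analysis in~\cite{EC:Sorensen12} and~\cite{ET:MiltersenSorensen10}. One minor caution: the phrase ``the pivot path has polynomial size'' should be read as each pivot being of polynomial description size, not as the path having polynomial length, since $\PPAD$-membership comes from the path structure rather than a polynomial bound on its length.
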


\begin{proof}
  Given an extensive form game $\Gamma$, construct the game
  $\Gamma_\eps$. The strategy constraints
  (Definition~\ref{DEF:StratCons}) are all expressed directly in terms
  of the realization weights of each player. Using
  Goemans'~\cite{MP:Goemans2015} extended formulation, the strategy
  constraints require only $O(\sum_h |C_h|\log|C_h|)$ additional
  constraints and variables, which is linearithmic in the size of the
  game. Furthermore, all occurrences of \eps{} are on the right-hand
  side of the linear constraints. These constraints fully replace the
  strategy constraints of the sequence form~\cite{GEB:Koller96}.
  In the sequence form, there is a single equality per information set,
  ensuring conservation of the realization weight. In our case, this
  conservation is ensured by the permutahedron constraint for each
  information set.

  In the case of two-player
  games, the equilibria can be captured by an LCP of polynomial size,
  which can be solved
  using Lemke's algorithm~\cite{MS:Lemke65}, if the strategy constraints
  are sufficiently well behaved. Since the added strategy constraints
  is a collection of constraints derived from Goemans' extended
  formulation, the proof that the constraints
  are well behaved is identical to the proofs of~\cite[Theorem 5.1 and
  5.4]{EC:Sorensen12}, which we will therefore
  omit here. Following the approach of \cite{ET:MiltersenSorensen10}
  the solution to the LCP can be made to contain the symbolic \eps{},
  with the probabilities of the strategies being formal polynomials in
  the variable \eps{}.

  By Proposition~\ref{PROP:2P}, equilibria of $\Gamma_\eps$ are
  \eps-quasi-proper equilibria of $\Gamma$. All realization weights of the
  computed realization plans are formal polynomials in~\eps{}. Finally, from
  this we may express the $\eps$-quasi-proper equilibrium in behavior
  strategies, where all probabilities are rational functions in $\eps$.
  \end{proof}

Having computed a symbolic $\eps$-quasi-proper equilibrium for $\Gamma$ it is
easy to compute the limit for $\eps{}\rightarrow0$, thereby giving a
quasi-proper equilibrium of $\Gamma$. It is crucial here that we first convert
into behavior strategies before computing the limit. In the case of zero-sum
games, the same construction can be used to construct a linear program of
polynomial size, whose solution would provide quasi-proper equilibria of the
given game. This is again analogous to the approach of
\cite{ET:MiltersenSorensen10} and further details are hence omitted.
\begin{theorem}
  A symbolic \eps-quasi-proper equilibrium for a given two-player extensive form
  zero-sum game with perfect recall can be computed in polynomial time.
\end{theorem}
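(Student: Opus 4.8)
The plan is to mirror the proof of the preceding theorem, replacing the LCP by a linear program, exactly as Koller, Megiddo and von Stengel replace the sequence-form LCP by a primal--dual LP pair in the zero-sum case, and as Miltersen and S\o{}rensen~\cite{ET:MiltersenSorensen10} do for quasi-perfect equilibria. First I would construct $\Gamma_\eps$ and record that the payoff $U_1$ of Player~1 remains bilinear in the two players' realization weights: the perturbation of Definition~\ref{DEF:StratCons} only shrinks each player's polytope of admissible realization plans --- to a product of $\eps$-permutahedra, described compactly via Goemans'~\cite{MP:Goemans2015} extended formulation using $O(\sum_h m_h\log m_h)$ auxiliary variables and constraints, with $\eps$ occurring only on right-hand sides --- and does not alter the objective.

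Since these admissible polytopes are nonempty, convex and compact, the minimax theorem applies to $\Gamma_\eps$, so a Nash equilibrium of the zero-sum game $\Gamma_\eps$ is exactly a pair of optimal solutions of the primal--dual LP pair obtained from $\max_{x}\min_{y} x^\top A y$ over the two restricted polytopes; collapsing the inner minimization by LP duality yields a single linear program $L_\eps$ of polynomial size in which $\eps$ appears only on the right-hand side. Next I would solve $L_\eps$ symbolically. Because $\eps$ enters $L_\eps$ only through right-hand sides that are fixed polynomials in $\eps$, for all sufficiently small $\eps>0$ an optimal solution is given by a single, $\eps$-independent feasible basis; by Cramer's rule the associated basic solution is a vector of rational functions of $\eps$ of polynomially bounded degree and bit-size. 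Following Miltersen and S\o{}rensen, this basis --- and hence the symbolic solution --- can be computed in polynomial time, e.g.\ by running a polynomial-time LP algorithm over the ordered field of real Puiseux series in $\eps$, or, concretely, by fixing $\eps$ to a rational value below the relevant threshold, computing an optimal basis there, and reading off the symbolic solution from that basis.

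By Proposition~\ref{PROP:2P} the resulting symbolic realization plans constitute a symbolic $2\eps$-quasi-proper equilibrium of $\Gamma$ (an $\eps$-quasi-proper equilibrium after rescaling $\eps$); converting realization weights to behavior probabilities expresses it in behavior strategies as rational functions of $\eps$, exactly as in the previous theorem, and letting $\eps\to 0$ on the behavior strategies yields a quasi-proper equilibrium of $\Gamma$.

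The main obstacle I expect is the symbolic step: arguing that a common optimal basis exists on an interval $(0,\eps_0)$ with $\eps_0$ of polynomial bit-size, and that the symbolic solution it produces has polynomially bounded complexity. This is precisely the point established, for the analogous quasi-perfect LP, by Miltersen and S\o{}rensen~\cite{ET:MiltersenSorensen10}, and since $L_\eps$ has the same shape --- polynomial size, $\eps$ only on the right-hand side --- their argument transfers with only notational changes; everything else (the minimax reduction, LP duality, the conversion to behavior strategies, and the passage to the limit) is routine.
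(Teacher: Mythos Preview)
Your proposal is correct and follows essentially the same approach as the paper, which itself merely states that the zero-sum case reduces to a polynomial-size LP with $\eps$ on the right-hand side and defers all details to the analogous argument of Miltersen and S{\o}rensen~\cite{ET:MiltersenSorensen10}. In fact you have filled in considerably more of the argument than the paper does; one minor inaccuracy is calling the restricted polytope a ``product of $\eps$-permutahedra,'' since the parameter $\rho$ of each permutahedron is a variable (the realization weight at the parent action), but this does not affect the argument because the resulting constraints are still linear in the realization weights with $\eps$ appearing only on the right-hand side.
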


\section{Multi-Player Games}
In this section we show that approximating a quasi-proper equilibrium for a
finite extensive-form game $\Gamma$ with $n\geq 3$ players is
$\FIXP_a$-complete. As for two-player games, by Proposition~\ref{PROP:2P} an
$\eps$-quasi-proper equilibrium for $\Gamma$ could be obtained by computing an
equilibrium of the perturbed game $\Gamma_\eps$. But for more than two players
we do not know how to make efficient use of this connection. Indeed, from the
viewpoint of computational complexity there is no advantage in doing so. Our
construction instead works by directly combining the approach and ideas of the
proof of $\FIXP_a$-completeness for quasi-perfect equilibrium in extensive form
games by Etessami~\cite{GEB:Etessami2021} and of the proof of
$\FIXP_a$-completeness for proper equilibrium in strategic form games by Hansen
and Lund~\cite{EC:HansenL18}. We explain below how these are modified and
combined to obtain the result. The approach obtains $\FIXP_a$ membership,
leaving $\FIXP$-membership as an open problem. A quasi-proper equilibrium is
defined as a limit point of a sequence of $\eps$-quasi-proper equilibria, whose
existence was obtained by the Kakutani fixed point theorem by
Myerson~\cite{IJGT:Myerson78}. This limit point operation in itself poses a
challenge for $\FIXP$ membership. The use of the Kakutani fixed point theorem
presents a further challenge.
However, as we show below analougous to the case of proper
equilibria~\cite{EC:HansenL18}, these may be approximated by $\delta$-almost
$\eps$-quasi-proper equilibria, which in turn can be expressed as a set of
Brouwer fixed points. In fact we show that the corresponding search problem is
in $\FIXP$.

To see how to adapt the result of Hansen and Lund~\cite{EC:HansenL18} for
strategic form games to the setting of extensive form games, it is helpful to
compare the definitions of $\eps$-proper equilibrium and $\delta$-almost
$\eps$-proper equilibrium in strategic form games to the corresponding
definitions of $\eps$-quasi-proper equilibrium and $\delta$-almost $\eps$-proper
equilibrium in extensive form games.

In a strategic form game, Player~$i$ is concerned with the payoffs
$U_i(x_{-i},c)$, which we may think of as \emph{valuations} of all pure
strategies $c \in S_i$. The relationship between these valuations in turn place
constraints on the strategy $x_i$ chosen by Player~$i$ in an $\eps$-proper
equilibrium or a $\delta$-almost $\eps$-proper equilibrium. In an extensive form
game, Player~$i$ is in a given information set $h$ considering the payoffs
$K_i^{h,c}$, which we may similarly think of as \emph{valuations} of all actions
$c \in C_h$. The relationship between these valuations place constraints on the
local strategy $b_{ih}$ chosen by Player~$i$ in a $\eps$-quasi-proper
equilibrium or a $\delta$-almost $\eps$-proper equilibrium. These constraints
are completely analogous to those placed on the strategies in strategic form
games. This fact will allow us to adapt the constructions of Hansen and Lund by
essentially just changing the way the valuations are
computed. Etessami~\cite{GEB:Etessami2021} observed that these may be computed
using dynamic programming and gave a construction of formulas computing them.
\begin{lemma}[cf.~{\cite[Lemma~7]{GEB:Etessami2021}}]
  Given an extensive form game of perfect recall~$\Gamma$, a
  player~$i$, an information set $h$ of Player~$i$, and $c \in C_h$
  there is a polynomial size $\{+,-,*,/,\max\}$-formula $V_i^{h,c}$
  computable in polynomial time satisfying that for any fully mixed
  behavior strategy profile $b$ it holds that  $V_i^{h,c}(b)=K_i^{h,c}(b)$.
\label{LEM:K-formula}
\end{lemma}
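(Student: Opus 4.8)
The plan is to recognise $K_i^{h,c}(b)$ as the optimal value of a one-player decision problem of perfect recall and to evaluate that value by a dynamic program over the game tree, which is then packaged as a polynomial-size arithmetic formula; this is essentially the construction of Etessami~\cite{GEB:Etessami2021}, which I recall in the notation of this paper.

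Fix a fully mixed profile $b$ and an information set $h$ of Player~$i$ with $c \in C_h$; since $b$ is fully mixed, $\rho_b(h)>0$ and every node is reached with positive probability, so all conditional probabilities below are well defined. Writing $w(v,a)$ for the child of a decision node $v$ along action $a$, unfolding the definitions gives $K_i^{h,c}(b)=\max_{b'_i}\sum_{v\in h}\rho_b(v\mid h)\,\Phi_{b'_i}(w(v,c))$, where $\Phi_{b'_i}(u)$ is the expected payoff to Player~$i$ from node $u$ onward when chance and the opponents play according to $b$ and Player~$i$ plays $b'_i$. Thus the quantity to compute is the value of a single-agent problem faced by Player~$i$ on the forest of subtrees rooted at $\{w(v,c)\mid v\in h\}$, against the fixed environment determined by $b$, where Player~$i$ re-optimises exactly the information sets that follow $h$ via $c$.

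The structural fact that makes this formula-expressible is a consequence of perfect recall. All nodes of an information set $h''$ of Player~$i$ share the same sequence of Player~$i$'s own actions and information sets, so Player~$i$'s contribution to the realisation probability is the same for each node of $h''$ and cancels in the conditional distribution; hence the belief $\rho_b(\cdot\mid h'')$ over the nodes of $h''$ depends only on chance and on $b_{-i}$, is independent of how Player~$i$ reaches $h''$, and is a small $\{+,*,/\}$-formula in $b$ with strictly positive denominator because $b$ is fully mixed. The single-agent problem therefore decomposes: the action taken at $h''$ affects only payoffs reached through $h''$, and given optimal continuation the best value obtainable at $h''$ is $\mathrm{Val}(h'')=\max_{a\in C_{h''}}\sum_{v''\in h''}\rho_b(v''\mid h'')\,G(w(v'',a))$, where $G(u)$ is the continuation value from node $u$ under optimal play of Player~$i$ at deeper information sets, with the usual averaging at chance nodes and at nodes of the other players, and $G$ of a Player~$i$ node resolved through the corresponding $\mathrm{Val}$. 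Processing the information sets of Player~$i$ below $h$ via $c$ in reverse order of the precedes-relation (a forest, by perfect recall) makes this a well-founded recursion, and one sets $V_i^{h,c}(b):=\sum_{v\in h}\rho_b(v\mid h)\,G(w(v,c))$. That this equals $\max_{b'_i}$ is proved by induction over the information-set forest: committing at each information set to the belief-weighted optimal action, with the fixed beliefs above, is without loss, since no information set of Player~$i$ other than the descendants of $h''$ is influenced by the choice at $h''$, and a pure maximiser $b'_i$ can be read off from the maximising actions.

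I expect the main obstacle to be precisely this last step, in two forms. First, one must avoid the tempting but wrong recursion that takes a separate maximum at every decision \emph{node}: this lets Player~$i$ behave as if distinguishing the nodes of an information set and strictly overestimates $K_i^{h,c}$ in general, so the whole argument has to be carried out at the level of information sets with the correct, strategy-independent beliefs. Second, one must turn the recursion into a polynomial-size $\{+,-,*,/,\max\}$-formula: the $\max$ at $h''$ effectively selects an action, and the continuation values passed upward must be the ones consistent with that selection, so the bookkeeping must be arranged so that each $\mathrm{Val}$ is reused rather than re-expanded (keeping the size polynomial) and uses only the allowed operations, with no comparisons or case distinctions. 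Granting this, the polynomial size bound and polynomial-time constructibility are immediate, since the shape of the recursion is read directly off the game tree, and evaluating the formula at any fully mixed $b$ returns $K_i^{h,c}(b)$ because every division is by a positive quantity $\rho_b(h'')$.
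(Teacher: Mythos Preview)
The paper does not give its own proof of this lemma; it simply cites Etessami's construction and moves on. Your sketch is essentially that construction---a dynamic program over Player~$i$'s information-set forest with strategy-independent beliefs---and the correctness argument you outline is the right one. Two comments on the obstacles you flag.

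Your first worry (that the $\max$ at $h''$ ``selects'' an action and you must pass up node-values consistent with that selection, seemingly requiring an argmax) dissolves. If you simply set $G(v''):=\mathrm{Val}(h'')$ for every $v''\in h''$, the resulting node values are ``wrong'' individually but still produce the correct $\mathrm{Val}(h')$ at the parent: by the perfect-recall calculation you already did, $\Pr[v''\mid h',a]=\Pr[h''\mid h',a]\cdot\rho_b(v''\mid h'')$, so $\sum_{v''\in h''}\Pr[v''\mid h',a]\,G(v'')=\Pr[h''\mid h',a]\cdot\mathrm{Val}(h'')$ regardless of which action realises the maximum. No selection gadget is needed.

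Your second worry is the real one, and ``reuse rather than re-expand'' is circuit language, not formula language. Expanded as a tree, your node-level recursion duplicates $\mathrm{Val}(h'')$ once for each node of $h''$, and this compounds down the information-set forest to exponential size. The fix is to carry out the recursion directly at the information-set level,
\[
\mathrm{Val}(h')=\max_{a\in C_{h'}}\Bigl[r(h',a)+\sum_{h''}p(h',a,h'')\,\mathrm{Val}(h'')\Bigr],
\]
where $r(h',a)$ is the belief-weighted payoff from leaves reached before the next Player~$i$ move and $p(h',a,h'')=\Pr[h''\mid h',a]$. Because the information sets of Player~$i$ below $(h,c)$ form a forest, each $\mathrm{Val}(h'')$ now occurs exactly once in the expanded formula, and the coefficients $r,p$ are themselves polynomial-size $\{+,*,/\}$-expressions in $b$ with strictly positive denominators for fully mixed $b$. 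That yields the polynomial-size $\{+,-,*,/,\max\}$-formula the lemma asserts.
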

We now state our result for multi-player games.
\begin{theorem}
  \label{THM:MainMultiPlayer}
  Given as input a finite extensive form game of perfect recall
  $\Gamma$ with $n$ players and a rational $\gamma>0$, the problem of
  computing a behavior strategy profile $b'$ such that there is a
  quasi-proper equilibrium $b$ of $\Gamma$ with
  $\norm{b'-b}_\infty < \gamma$ is $\FIXP_a$-complete.
\end{theorem}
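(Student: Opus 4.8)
The plan is to prove the two directions separately, with membership being the substantial part. For \emph{hardness} I would invoke the standard translation of an $n$-player strategic form game into an extensive form game of perfect recall of comparable size, under which every quasi-proper equilibrium of the extensive form game is a proper equilibrium of the strategic form game, and under which a $\gamma$-approximation of the former yields a $\gamma$-approximation of the latter. Since approximating a proper equilibrium of an $n$-player strategic form game with $n\geq 3$ is $\FIXP_a$-hard~\cite{EC:HansenL18}, $\FIXP_a$-hardness of the stated problem follows. (Even more directly, a $\gamma$-approximate quasi-proper equilibrium is a $\gamma$-approximate Nash equilibrium, so $\FIXP_a$-hardness already follows from that of approximating a Nash equilibrium of a $3$-player game~\cite{SICOMP:EtessamiY10}.)

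For \emph{membership} I would proceed in two steps. First, reduce the approximation of a quasi-proper equilibrium to the computation of a $\delta$-almost $\eps$-quasi-proper equilibrium: the goal is a quantitative approximation lemma stating that for every rational $\gamma>0$ one can compute in polynomial time rationals $\eps,\delta>0$ of polynomially bounded bit-size such that every $\delta$-almost $\eps$-quasi-proper equilibrium $b'$ of $\Gamma$ satisfies $\norm{b'-b}_\infty<\gamma$ for some quasi-proper equilibrium $b$ of $\Gamma$. The qualitative version is a routine compactness argument: if $b^*$ is a limit point of $\delta$-almost $\eps$-quasi-proper equilibria as $\delta\rightarrow^+0$ with $\eps$ fixed, then whenever $K_i^{h,c}(b^*)<K_i^{h,c'}(b^*)$ the strict inequality persists with some positive $\delta$-slack along the sequence, forcing $b^*_{ih}(c)\leq\eps\,b^*_{ih}(c')$, so $b^*$ is $\eps$-quasi-proper; letting $\eps\rightarrow^+0$ and using that the set of quasi-proper equilibria is nonempty and closed gives the claim. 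Turning this into a polynomial-time reduction with polynomially bounded $\eps,\delta$, and making it robust to replacing ``fixed point'' by ``approximate fixed point'', is done exactly as in Etessami's treatment of quasi-perfect equilibrium~\cite{GEB:Etessami2021} and in the treatment of proper equilibrium by Hansen and Lund~\cite{EC:HansenL18}, whose arguments I would follow.

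The second membership step is to show that, for fixed rational $\eps,\delta>0$, the search problem ``compute a $\delta$-almost $\eps$-quasi-proper equilibrium of $\Gamma$'' lies in $\FIXP$. I would construct a $\{+,-,*,/,\max,\min\}$-circuit computing a Brouwer function on a suitably shrunk copy of the polytope $\prod_i\prod_{h\in H_i}\Delta(C_h)$ of behavior strategy profiles, whose fixed points are exactly the $\delta$-almost $\eps$-quasi-proper equilibria. The function acts independently at each information set $h\in H_i$: given the current profile $b$, first compute the valuations $v_c:=V_i^{h,c}(b)=K_i^{h,c}(b)$ for $c\in C_h$ via the formulas of Lemma~\ref{LEM:K-formula}, and then feed the vector $(v_c)_{c\in C_h}$ into the local proper-equilibrium gadget of Hansen and Lund~\cite{EC:HansenL18}, which returns a new local strategy $b'_{ih}\in\Delta(C_h)$ such that the joint fixed-point condition over all $(i,h)$ reads precisely: for all $i,h$ and all $c,c'\in C_h$, $b_{ih}(c)\leq\eps\,b_{ih}(c')$ whenever $v_c+\delta\leq v_{c'}$. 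As noted in the discussion preceding the theorem, the constraint imposed on $b_{ih}$ by the valuations $K_i^{h,c}$ in the definition of a $\delta$-almost $\eps$-quasi-proper equilibrium is formally identical to the one imposed on $x_i$ by the valuations $U_i(x_{-i};c)$ in the definition of a $\delta$-almost $\eps$-proper equilibrium, so the gadget applies verbatim with $K_i^{h,c}$ in place of $U_i(x_{-i};c)$ and our fixed point problem becomes literally the one analysed in~\cite{EC:HansenL18}. The extensive-form-specific checks are that $V_i^{h,c}$ is well defined and continuous on the shrunk domain (true, since on fully mixed profiles every realization weight occurring as a denominator in $V_i^{h,c}$ is bounded away from $0$ once the polytope is shrunk) and that taking the product of the per-information-set gadgets yields a self-map whose fixed points are the $\delta$-almost $\eps$-quasi-proper equilibria; the latter holds because, under perfect recall, each player's deviation incentives decompose over information sets exactly as in~\cite{GEB:Etessami2021}. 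This places the $\delta$-almost $\eps$-quasi-proper search problem in $\FIXP$, and together with the first membership step it places the approximation problem of the theorem in $\FIXP_a$; combined with the hardness above, this yields $\FIXP_a$-completeness.

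I expect the main obstacle to be the first membership step: bounding the bit-size of $\eps$ and $\delta$ so that the reduction runs in polynomial time while still guaranteeing $\gamma$-closeness to a genuine quasi-proper equilibrium, which must simultaneously cope with the nested limit in the definition of quasi-proper equilibrium and with the passage from exact to approximate fixed points. This is exactly the portion of the argument imported, with adaptations, from~\cite{GEB:Etessami2021,EC:HansenL18}; by contrast, the genuinely new ingredient --- composing the proper-equilibrium gadget of Hansen and Lund with Etessami's valuation formulas over the information-set structure --- is, thanks to the noted formal analogy, essentially bookkeeping.
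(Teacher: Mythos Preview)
Your overall architecture matches the paper's: hardness via Nash (or proper) equilibrium, and membership by combining Etessami's valuation formulas $V_i^{h,c}$ with the Hansen--Lund proper-equilibrium gadget at each information set, together with ``almost implies near'' to pass from $\delta$-almost $\eps$-quasi-proper to quasi-proper. The second membership step you describe is essentially the paper's Theorem~\ref{THM:almost-quasi-proper-FIXP}.

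There is, however, a genuine error in how you set up the first membership step. You aim for ``rationals $\eps,\delta>0$ of polynomially bounded bit-size'' computed in polynomial time from $(\Gamma,\gamma)$, and then plan to invoke the $\FIXP$ result for those fixed $\eps,\delta$. Such $\eps,\delta$ are not known to exist: the only bounds available (via quantifier elimination over the reals, as in Lemmas~\ref{LEM:infinitesimal1} and~\ref{LEM:infinitesimal2}) are of the form $\eps<\gamma^{2^{q_1(\abs{\Gamma})}}$ and $\delta<\min(\gamma,\eps)^{2^{q_2(\abs{\Gamma})}}$, i.e.\ doubly exponentially small, hence with \emph{exponential} bit-size. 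Feeding such rationals as external inputs to a $\FIXP$ instance would blow up the reduction.

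The paper (like \cite{GEB:Etessami2021,EC:HansenL18}) resolves this not by finding small $\eps,\delta$, but by making them \emph{internal} to the circuit: starting from $\gamma/2$, one performs $q_1(\abs{\Gamma})$ repeated squarings to obtain $\eps$, then $q_2(\abs{\Gamma})$ further squarings to obtain $\delta$, all inside the $\{+,-,*,/,\max,\min\}$-circuit. This uses only polynomially many gates while producing doubly exponentially small values. The single resulting circuit maps $B$ to $B$ (after a retraction onto $B^{\eta(\eps^2)}$), and its fixed points are $\gamma$-close to quasi-proper equilibria. So your two-step decomposition ``first choose $\eps,\delta$ externally, then solve a $\FIXP$ problem'' should be replaced by ``build one $\FIXP$ circuit that computes $\eps,\delta$ as virtual infinitesimals and then runs the gadget''; with that correction your plan coincides with the paper's proof.
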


Before presenting the proof of Theorem~\ref{THM:MainMultiPlayer} we
describe the changes needed to adapt the results of Hansen and
Lund~\cite{EC:HansenL18} to extensive-form games in more details.

The first step of the construction is to establish that to compute an
approximation to a quasi-proper equilibrium it is sufficient to
compute (an approximation to) an $\eps$-quasi-proper equilibrium, for
a sufficiently small $\eps>0$, and further to compute an approximation
to an $\eps$-quasi-proper equilibrium it is sufficient to compute a
$\delta$-almost $\eps$-quasi-proper equilibrium, for a sufficiently
small $\delta>0$. Both statements are obtained by invoking the
``almost implies near'' paradigm of
Anderson~\cite{TAMS:Anderson86}. The first statement generalizes
essentially verbatim from the case of proper equilibrium in strategic
form games~\cite[Lemma~4.2]{EC:HansenL18} and we omit the proof.
\begin{lemma}
  For any fixed extensive form game of perfect recall $\Gamma$, and
  any $\gamma>0$, there is an $\eps>0$ so that any
  $\eps$-quasi-proper equilibrium of $\Gamma$ has
  $\ell_\infty$-distance at most $\gamma$ to some quasi-proper
  equilibrium of $\Gamma$.
\label{LEM:Almost-implies-near1}
\end{lemma}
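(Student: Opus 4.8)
The claim is a compactness-type statement: as $\eps \to 0^+$, every $\eps$-quasi-proper equilibrium lies within $\gamma$ of *some* quasi-proper equilibrium. The natural tool is the "almost implies near" principle of Anderson, exactly as invoked right after the statement. Concretely, I would argue by contradiction: suppose not, so there is a fixed $\gamma>0$ and a sequence $\eps_n \to 0^+$ together with $\eps_n$-quasi-proper equilibria $b^{(n)}$ each of $\ell_\infty$-distance more than $\gamma$ from the (closed, nonempty) set $Q$ of quasi-proper equilibria of $\Gamma$. Since the behavior strategy space $B$ is compact, pass to a convergent subsequence $b^{(n_j)} \to b^*$. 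The limit $b^*$ is then by definition a quasi-proper equilibrium of $\Gamma$ (it is a limit point of a sequence of $\eps$-quasi-proper equilibria with $\eps \to 0^+$), so $b^* \in Q$; but $\mathrm{dist}(b^{(n_j)}, Q) > \gamma$ for all $j$ forces $\mathrm{dist}(b^*, Q) \ge \gamma > 0$, contradicting $b^* \in Q$.

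The one subtlety to handle carefully is that the property "is an $\eps$-quasi-proper equilibrium" must be phrased in a way that is stable under taking limits of $\eps$, so that it really plays the role required by the contradiction argument. The definition has two ingredients: $b$ is fully mixed, and for all $c,c'\in C_h$, $b_{ih}(c)\le \eps\, b_{ih}(c')$ whenever $K_i^{h,c}(b)<K_i^{h,c'}(b)$. In the contradiction setup, each $b^{(n)}$ is fully mixed, but the limit $b^*$ need not be; this is fine, because quasi-properness of $b^*$ only asks that $b^*$ be a *limit point* of $\eps$-quasi-proper equilibria, which is exactly what we have arranged. So the argument does not need $b^*$ itself to be fully mixed — it only needs that the sequence $b^{(n_j)}$ witnessing the limit consists of $\eps_{n_j}$-quasi-proper equilibria, which holds by hypothesis. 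Thus the contradiction argument goes through directly without invoking the full Anderson machinery; the "almost implies near" formulation simply repackages this compactness.

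The main obstacle — and the reason the paper explicitly cites~\cite{TAMS:Anderson86} rather than giving the two-line argument above — is ensuring that the quantifier structure is uniform: one wants a *single* $\eps$ depending only on $\gamma$ (and $\Gamma$) that works for *all* $\eps$-quasi-proper equilibria simultaneously, not merely along a chosen sequence. The contradiction argument in fact delivers exactly this: if no such $\eps$ existed, one could extract a bad sequence as above and derive the contradiction, so some $\eps=\eps(\gamma,\Gamma)>0$ must work. Formally this is the content of Anderson's principle applied to the first-order (over the reals) predicate expressing "$b$ is an $\eps$-quasi-proper equilibrium"; the set of quasi-proper equilibria is the $\eps\to 0$ limit, and the principle yields the desired finite bound. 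Since this step is entirely parallel to~\cite[Lemma~4.2]{EC:HansenL18} for proper equilibria in strategic form games — the only change being the replacement of the valuations $U_i(x_{-i};c)$ by the valuations $K_i^{h,c}(b)$, which range over a compact set just as well — I would follow that proof verbatim and omit the details.
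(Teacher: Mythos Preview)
Your proposal is correct and matches the paper's intended approach: the paper explicitly omits the proof, stating only that it ``generalizes essentially verbatim'' from \cite[Lemma~4.2]{EC:HansenL18} via Anderson's ``almost implies near'' paradigm, which is precisely the compactness/contradiction argument you spell out. Your direct argument---extract a convergent subsequence in the compact space $B$, observe the limit is quasi-proper \emph{by definition}, and derive a contradiction with the distance bound---is exactly what that citation unpacks to, and your handling of the full-mixedness subtlety is accurate.
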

We now define a \emph{perturbed} version of $\Gamma$, restricting the
domain of local behavior strategies. For $\eps>0$ and a positive
integer $m$ define $\eta_m(\eps)=\eps^m/m$. The $\eta$-perturbed game
$\Gamma_\eta$ restricts a local behavior strategy in every information
set $h$ to use behavior probabilities at least $\eta_{m_h}(\eps)$. Let
$B_\eta$ be the set of such restricted behavior strategy profiles of
$\Gamma$. The proof of the second statement following below very
closely follows that of~\cite[Lemma~4.3]{EC:HansenL18}. For
completeness we give the proof.
\begin{lemma}
  For any fixed extensive form game of perfect recall $\Gamma$, any
  $\eps>0$ and any $\gamma>0$, there is a $\delta>0$ so that any
  $\delta$-almost $\eps$-quasi-proper equilibrium of $\Gamma$ in $B_\eta$
  has $\ell_\infty$-distance at most $\gamma$ to some
  $\eps$-quasi-proper equilibrium of $\Gamma$ in $B_\eta$.
\label{LEM:Almost-implies-near2}
\end{lemma}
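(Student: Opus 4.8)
The plan is to use the "almost implies near" compactness principle of Anderson, exactly mirroring the proof of Lemma~4.3 in~\cite{EC:HansenL18}, but working inside the compact restricted strategy space $B_\eta$ rather than the full simplex product. The key observation is that, for fixed $\Gamma$ and fixed $\eps>0$, being an $\eps$-quasi-proper equilibrium in $B_\eta$ is a closed condition: it is defined by the finitely many implications ``$K_i^{h,c}(b) < K_i^{h,c'}(b) \Rightarrow b_{ih}(c) \leq \eps b_{ih}(c')$'', and on $B_\eta$ every behavior strategy profile is fully mixed (all behavior probabilities are at least $\eta_{m_h}(\eps)>0$), so all the quantities $K_i^{h,c}(b)$ are continuous functions of $b$ on $B_\eta$. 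Hence the set $Q_\eps \subseteq B_\eta$ of $\eps$-quasi-proper equilibria in $B_\eta$ is a closed, nonempty subset of the compact set $B_\eta$.

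The argument then runs by contradiction. Suppose the statement fails for some $\gamma>0$: then for every $\delta>0$ there is a $\delta$-almost $\eps$-quasi-proper equilibrium $b^\delta \in B_\eta$ whose $\ell_\infty$-distance to $Q_\eps$ is at least $\gamma$. Taking $\delta = 1/j$ and using compactness of $B_\eta$, extract a convergent subsequence $b^{1/j_t} \to b^* \in B_\eta$. The limit $b^*$ still has $\ell_\infty$-distance at least $\gamma$ from $Q_\eps$, so in particular $b^* \notin Q_\eps$. On the other hand I claim $b^*$ \emph{is} an $\eps$-quasi-proper equilibrium in $B_\eta$: fix $i$, $h \in H_i$, and $c,c' \in C_h$ with $K_i^{h,c}(b^*) < K_i^{h,c'}(b^*)$; by continuity of the $K$'s on $B_\eta$, for all large $t$ we have $K_i^{h,c}(b^{1/j_t}) + 1/j_t \leq K_i^{h,c'}(b^{1/j_t})$, so the $\delta$-almost condition with $\delta = 1/j_t$ forces $b^{1/j_t}_{ih}(c) \leq \eps\, b^{1/j_t}_{ih}(c')$, and passing to the limit gives $b^*_{ih}(c) \leq \eps\, b^*_{ih}(c')$. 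Since $b^* \in B_\eta$ it is fully mixed, so $b^* \in Q_\eps$, a contradiction. Therefore some $\delta>0$ works, which is the claim.

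The one point that needs a little care — and the closest thing to an obstacle — is justifying the continuity of $b \mapsto K_i^{h,c}(b)$ on all of $B_\eta$. The definition of $K_i^{h,c}(b)$ presupposes $\rho_b(h)>0$; on $B_\eta$ this holds because every behavior probability is bounded below by $\eta_{m_h}(\eps)>0$, so $\rho_b(h)$ is bounded below by a positive constant depending only on $\Gamma$ and $\eps$. Given that, one may invoke Lemma~\ref{LEM:K-formula}: $K_i^{h,c}(b) = V_i^{h,c}(b)$ for a $\{+,-,*,/,\max\}$-formula $V_i^{h,c}$, and such a formula is continuous wherever it is defined, in particular on the compact set $B_\eta$ on which all denominators are bounded away from zero. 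Alternatively one can argue directly that $K_i^{h,c}(b) = \max_{b'_i \text{ pure}} U_{ih}(b\subslash{h}b'_i/c)$ is a finite maximum of functions each continuous on $B_\eta$ (each $U_{ih}(\cdot)$ being a ratio of multilinear functions with denominator $\rho_b(h)$ bounded below). Either way, once continuity on the compact domain $B_\eta$ is in hand, the compactness argument above goes through verbatim as in~\cite[Lemma~4.3]{EC:HansenL18}.
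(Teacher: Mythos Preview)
Your proof is correct and follows essentially the same approach as the paper: a Bolzano--Weierstrass compactness argument on $B_\eta$, showing that any limit of $(1/j)$-almost $\eps$-quasi-proper equilibria is an $\eps$-quasi-proper equilibrium via continuity of the $K_i^{h,c}$, yielding the desired contradiction. Your additional discussion of why $K_i^{h,c}$ is continuous on $B_\eta$ (via Lemma~\ref{LEM:K-formula} or the finite-max-of-ratios argument) is a welcome elaboration that the paper leaves implicit; the remark that $Q_\eps$ is nonempty is not actually needed for the argument, but does no harm.
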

\begin{proof}
  Suppose to the contrary there is a game $\Gamma$, $\eps>0$, and
  $\gamma>0$ so that for all $\delta>0$ there is a $\delta$-almost
  $\eps$-quasi-proper equilibrium $b_\delta$ of $\Gamma$ in $B_\eta$
  so that there is no $\eps$-quasi-proper equilibrium in $B_\eta$ in a
  $\gamma$-neighborhood (with respect to the $\ell_\infty$ norm) of
  $b_\delta$. Consider the sequence $(b_{1/n})_{n\in \NN}$. Since this is
  a sequence in a compact space, by the Bolzano-Weierstrass Theorem it
  has a convergent subsequence $(b_{1/n_r})_{r_\in\NN}$. Let
  $b^* = \lim_{r\rightarrow\infty}b_{1/n_r}$. We now claim that $b^*$
  is an $\eps$-quasi-proper equilibrium, which will contradict the
  statement that there is no $\eps$-quasi-proper equilibrium in a
  $\gamma$-neighborhood of any of the behavior strategy profiles
  $b_{1/n}$.

  First, since $b_{1/n_r} \in B_\eta$ for all $n_r$ we also have
  $b^* \in B_\eta$. In particular, $b^*$ is fully mixed. The
  functions $K_i^{h,c}$ are well defined on $B_\eta$. Define $\nu > 0$
  by
\[
  \nu = \min_{i,h,c,c'} \left\{K_i^{h,c'}(b^*)-K_i^{h,c}(b^*)\mid
    K_i^{h,c}(b^*)<K_i^{h,c'}(b^*)\right\} \enspace .
\]
By continuity of the functions $K_i^{h,c}$ we have
$\lim_{r \rightarrow \infty} K_i^{h,c}(b_{1/n_r}) = K_i^{h,c}(b^*)$,
for all $i$, $h$, and $c$. Thus let $N$ be an integer such that
$\Abs{K_i^{h,c}(b_{1/n_r}) - K_i^{h,c}(b^*)} \leq \nu/3$
and such that $1/n_r \leq \nu/3$, for all $i$,$h$,$c$, and $r \geq
N$.

Consider now an information set $h$ of Player~$i$ and $c,c' \in C_h$
such that $K_i^{h,c}(b^*) < K_i^{h,c'}(b^*)$. By construction, for any
$r \geq N$ we also have
$K_i^{h,c}(b_{1/n_r}) + 1/{n_r} \leq K_i^{h,c'}(b_{1/n_r})$. Since
$b_{1/n_r}$ is a $(1/n_r)$-almost $\eps$-quasi-proper equilibrium it follows
that $(b_{1/n_r})_{ih}(c) \leq \eps (b_{1/n_r})_{ih}(c')$. Taking the
limit $r \rightarrow \infty$ we also have
$b^*_{ih}(c) \leq b^*_{ih}(c')$, which shows that $b^*$ is an
$\eps$-quasi-proper equilibrium.
\end{proof}

The second step is to show that given $\Gamma$, $\delta>0$, and $\eps>0$, the
task of computing a $\delta$-almost $\eps$-quasi-proper equilibrium of $\Gamma$
belongs to $\FIXP$.  We outline the details of this below, using a slightly
different notation compared to~\cite{EC:HansenL18}.
\begin{definition}[cf.~{\cite[Definition~4.4]{EC:HansenL18}}]
  Let $v \in \RR^m$, $x \in \RR_+^m$, $\delta>0$, and $\eps>0$. We say
  that $x$ satisfies the $\delta$-almost $\eps$-proper property with
  respect to valuation $v$ if and only if $x_c \leq \eps x_{c'}$
  whenever $v_c + \delta \leq v_{c'}$, for all $c,c'$.
\end{definition}

Hansen and Lund~\cite[Definition~4.6]{EC:HansenL18} define a function
$P_{m,\delta,\eps} : \RR_+^m \times \RR^m \rightarrow \RR_+^m$ as a
main ingredient of computing $\delta$-almost $\eps$-proper
equilibrium. This is given by 
\[
(P_{m,\delta,\eps}(x,v))_c = \min_{c'} \left\{\deltasel(x_{c},\eps x_{c'},v_{c'}-v_{c})\right\} \enspace ,
\]
where
\[
  \deltasel(x,y,z) =
  \begin{cases}
    x & \text{ if } z \leq 0\\ (1-z/\delta)x + (z/\delta) y & \text{
      if } 0 \leq z \leq \delta\\ y & \text{ if } \delta \leq z
  \end{cases}
\]
is the the $\delta$-approximate selection function, for $\delta>0$.

The function function $P_{m,\delta,\eps}$ then induces an operator
$P^v_{m,\delta,\eps} : \RR_+^m \rightarrow \RR_+^m$ by letting
$P^v_{m,\delta,\eps}(x) = P_{m,\delta,\eps}(x,v)$. Define
$\Delta_m=\{y \in \RR^m \mid \norm{y}_1=1; \forall j: y_j \geq 0\}$
and for $\eta>0$ define
$\Delta_m^\eta = \{y \in \RR^m \mid \norm{y}_1=1; \forall j: y_j \geq
\eta\}$. We may identify the points of $\Delta_m$ and $\Delta_m^\eta$
with probability distributions on a set of~$m$ elements. Let
$\tau_m \in \Delta_m$ be the uniform distribution on~$m$
elements. Note that $\tau_m \in \Delta_m^{1/m}$. We need the following
properties of $P^v_{m,\delta,\eps}$ proved by Hansen and Lund. We let
$(P^v_{m,\delta,\eps})^{\circ j}$ denote the $j$-th iteration of the
operator $P^v_{m,\delta,\eps}$.
\begin{lemma}[cf.~{\cite[Lemma~4.10]{EC:HansenL18}}]
  If $x \in \RR_+^m$ is a fixed point of $P^v_{m,\delta,\eps}$ then
  $x$ satisfies the $\delta$-almost $\eps$-proper property with respect
  to~$v$.
\end{lemma}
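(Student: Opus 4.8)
The plan is to read the claim straight off the defining formula for $P^v_{m,\delta,\eps}$, with no auxiliary machinery. Let $x \in \RR_+^m$ be a fixed point of $P^v_{m,\delta,\eps}$, so that for every coordinate $c$ we have
\[
  x_c = (P^v_{m,\delta,\eps}(x))_c = \min_{c'}\left\{\deltasel(x_c,\eps x_{c'},v_{c'}-v_c)\right\} \enspace .
\]
Fix any pair $c,c'$ with $v_c+\delta\leq v_{c'}$, i.e.\ $v_{c'}-v_c\geq\delta$. Then the third case in the definition of the $\delta$-approximate selection function applies, giving $\deltasel(x_c,\eps x_{c'},v_{c'}-v_c)=\eps x_{c'}$. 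Since $x_c$ is the minimum over all second indices of these quantities, in particular $x_c\leq \deltasel(x_c,\eps x_{c'},v_{c'}-v_c)=\eps x_{c'}$. As $c,c'$ were an arbitrary such pair, $x$ satisfies the $\delta$-almost $\eps$-proper property with respect to $v$.

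Before invoking this I would record two routine sanity checks so that the statement is meaningful. First, $P^v_{m,\delta,\eps}$ is genuinely a self-map of $\RR_+^m$: $\deltasel$ applied to nonnegative arguments is nonnegative in each of its three branches (it returns either an input, or $\eps x_{c'}$, or a convex combination of the two), and a pointwise minimum of nonnegative numbers is nonnegative. Second, the minimum in the definition is a finite minimum over $c'\in\{1,\dots,m\}$ and is therefore attained, so no compactness or limiting argument is needed; the whole proof is a single inequality coming from ``$\min \leq$ one particular term''.

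I do not anticipate any real obstacle here: the content of the lemma lies entirely in the design of $\deltasel$, whose third branch was chosen precisely so that a valuation gap of at least $\delta$ forces the corresponding coordinate of any fixed point down to at most $\eps$ times the competing coordinate. The only point deserving a moment's care is the boundary value $z=\delta$, but $\deltasel$ is continuous there (the second branch evaluates to $\eps x_{c'}$ at $z=\delta$, agreeing with the third), so the closed inequality $v_c+\delta\leq v_{c'}$ causes no ambiguity and the argument above goes through verbatim.
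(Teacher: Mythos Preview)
Your proof is correct. The paper itself does not supply a proof of this lemma but defers to the cited reference; your argument is the natural direct verification from the definition of $P^v_{m,\delta,\eps}$ and $\deltasel$, and is presumably what the cited source contains.
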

\begin{proposition}[cf.~{\cite[Lemma~4.11 and Proposition~4.15]{EC:HansenL18}}]
  Suppose $\eps \leq 1/m$. Then
  $(P^v_{m,\delta,\eps})^{\circ j}(\tau_m)$ in contained in
  $\Delta_m^{\eta_m}$ and satisfy the $\delta$-almost
  $\sqrt{\eps}$-proper property with respect to $v$ for all
  $j\geq 2m^2$.
  \label{PROP:UniformPIteration-is-almost-proper}
\end{proposition}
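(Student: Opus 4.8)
Proposition~\ref{PROP:UniformPIteration-is-almost-proper} is a purely combinatorial statement about the discrete dynamical system obtained by iterating the operator $P^v_{m,\delta,\eps}$ on $\tau_m$ --- no game enters it --- so the plan is simply to reprove the two cited facts of Hansen and Lund~\cite[Lemma~4.11 and Proposition~4.15]{EC:HansenL18}, whose arguments depend only on the abstract valuation vector $v \in \RR^m$ and hence transfer unchanged. Fix $v$ and write $x^{(j)}$ for the $j$-th iterate of $P^v_{m,\delta,\eps}$ started at $\tau_m$ (renormalised to lie in $\Delta_m$ at each step). One needs to show $x^{(j)} \in \Delta_m^{\eta_m}$ and that $x^{(j)}$ has the $\delta$-almost $\sqrt{\eps}$-proper property with respect to $v$, for every $j \ge 2m^2$.

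The conceptual engine is the defining identity $(P_{m,\delta,\eps}(x,v))_c = \min_{c'}\deltasel(x_c,\eps x_{c'},v_{c'}-v_c)$. Taking $c'=c$ makes the third argument of $\deltasel$ zero, so $(P(x))_c \le x_c$ and coordinates never grow before renormalisation; taking any $c'$ with $v_{c'}-v_c \ge \delta$ gives $(P(x))_c \le \eps x_{c'}$, so a coordinate strictly dominated in valuation is pushed below an $\eps$-fraction of a dominating one; and the coordinate of maximal valuation is a fixed point. Partitioning the coordinates into \emph{valuation levels} by the strict order $c\prec c'\iff v_c+\delta\le v_{c'}$ --- of which there are at most $m$ --- one shows that after the iteration has \emph{sorted itself} the ratio of any coordinate to a valuation-dominating one has converged close to its limiting value (at most $\eps$, and typically smaller along a long $\prec$-chain), and that throughout the iteration every coordinate retains at least an $\eps^{\ell}$-fraction of the maximal coordinate, where $\ell \le m-1$ is its height in $\prec$. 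The lower bound then gives $x^{(j)}_c \ge \eps^{m-1}(1/m) \ge \eps^m/m = \eta_m(\eps)$, hence $x^{(j)}\in\Delta_m^{\eta_m}$; and once the ratios have converged to within the gap between $\eps$ and $\sqrt{\eps}$ --- legitimate precisely because $\eps\le 1/m$ forces $\sqrt{\eps}>\eps$ --- we obtain the $\delta$-almost $\sqrt{\eps}$-proper property. It is exactly this slack that lets a \emph{fixed} finite number of iterations do the work of a limit.

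The main obstacle is the quantitative claim that $j \ge 2m^2$ iterations already suffice, that is, bounding how fast the system sorts and the ratios settle. The delicate regime is the interpolation case $0 < v_{c'}-v_c < \delta$ of $\deltasel$, where coordinates are blended rather than collapsed, so one cannot conclude from a single round of domination; instead one tracks the pairwise ratios between coordinates of adjacent levels and shows they move monotonically toward their fixed-point values, with a bounded improvement per round and at most about $2m$ rounds needed to stabilise each of the at most $m$ levels. This is precisely the potential/monotonicity bookkeeping of~\cite[Lemma~4.11 and Proposition~4.15]{EC:HansenL18}, and since $P^v_{m,\delta,\eps}$ reads only the vector $v$, it applies here verbatim; combined with the observations above, this establishes the proposition.
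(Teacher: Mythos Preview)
The paper does not supply its own proof of this proposition: it is stated with the ``cf.~\cite[Lemma~4.11 and Proposition~4.15]{EC:HansenL18}'' attribution and then used as a black box. Your proposal correctly identifies this and takes the same line --- the operator $P^v_{m,\delta,\eps}$ depends only on the abstract valuation vector $v\in\RR^m$, so the Hansen--Lund argument carries over verbatim --- and your informal sketch of that argument (valuation levels, monotone contraction of pairwise ratios, the $\eps$ vs.\ $\sqrt{\eps}$ slack) is a fair summary.

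One small inaccuracy: you write that the iterates are ``renormalised to lie in $\Delta_m$ at each step'', but the operator as defined in the paper does \emph{not} renormalise; indeed the coordinate of maximal valuation is fixed while the others can shrink, so $\norm{(P^v)^{\circ j}(\tau_m)}_1$ need not equal~$1$. Normalisation is applied only once, after the iteration, via $b'_{ih}=y_{ih}/\norm{y_{ih}}_1$. This does not affect the $\delta$-almost $\sqrt{\eps}$-proper conclusion (which concerns only ratios of coordinates), nor the coordinatewise lower bound $\eta_m$, but you should drop the per-step renormalisation from the description to match the actual construction.
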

We now have everything needed for defining the fixed point problem. We
define a function
$F_{\eps,\delta} : B^{\eta(\eps^2)} \rightarrow B^{\eta(\eps^2)}$ as
follows. For $b \in B^{\eta(\eps^2)}$, define the following for every
$i$ and every information set $h$ of Player~$i$: First we let
$v_{ih} \in \RR_{m_h}$ be given by $(v_{ih})_c = V_i^{h,c}(b)$. We then
let $y_{ih}= (P^{v_ih}_{m_h,\delta,\eps})^{\circ 2m_h^2}(\tau_{m_h})$ and
$b'_{ih}=y_{ih}/\norm{y_{ih}}_1$. Finally define $F_{\eps,\delta}(b)=b'$.
\begin{proposition}
  Let $\delta>0$ and $0<\eps<1$. Then every fixed point
  $b \in B^{\eta(\eps^2)}$ of $F_{\eps,\delta}$ is a $\delta$-almost
  $\eps$-quasi-proper equilibrium of $\Gamma$.
\end{proposition}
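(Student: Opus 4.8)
The plan is to unwind the definition of $F_{\eps,\delta}$ and check directly that a fixed point $b$ satisfies the two requirements of being a $\delta$-almost $\eps$-quasi-proper equilibrium: that $b$ is fully mixed, and that for every player $i$, every information set $h \in H_i$, and every $c,c' \in C_h$, we have $b_{ih}(c) \le \eps\, b_{ih}(c')$ whenever $K_i^{h,c}(b) + \delta \le K_i^{h,c'}(b)$.

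First I would handle full mixedness. By construction $b \in B^{\eta(\eps^2)}$, and $B^{\eta(\eps^2)}$ consists of behavior strategy profiles in which every local behavior probability is at least $\eta_{m_h}(\eps^2) = \eps^{2m_h}/m_h > 0$; hence $b$ is fully mixed, and in particular $\rho_b(h) > 0$ for every information set $h$, so the valuations $K_i^{h,c}(b)$ are well-defined. By Lemma~\ref{LEM:K-formula} we have $V_i^{h,c}(b) = K_i^{h,c}(b)$ for such $b$, so the vector $v_{ih}$ used in the definition of $F_{\eps,\delta}$ is exactly $((K_i^{h,c}(b))_{c \in C_h})$.

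Next I would exploit the fixed point condition. Since $b$ is a fixed point, $b_{ih} = b'_{ih} = y_{ih}/\norm{y_{ih}}_1$ where $y_{ih} = (P^{v_{ih}}_{m_h,\delta,\eps})^{\circ 2m_h^2}(\tau_{m_h})$; note $y_{ih}$ is obtained by iterating $P^{v_{ih}}_{m_h,\delta,\eps}$ at least $2m_h^2$ times starting from $\tau_{m_h}$. Here there is a small bookkeeping point: Proposition~\ref{PROP:UniformPIteration-is-almost-proper} is stated with the hypothesis $\eps \le 1/m$ and yields the $\delta$-almost $\sqrt{\eps}$-proper property. To apply it and obtain the $\delta$-almost $\eps$-proper property with respect to $v_{ih}$ one runs the proposition with parameter $\eps^2$ in place of $\eps$ (legitimate since $\eps^2 \le 1/m_h$ follows from the standing constraints on $\eps$, or can be arranged by choosing $\eps$ small as in the ambient construction), so that $\sqrt{\eps^2} = \eps$; this is also why the domain is $B^{\eta(\eps^2)}$ and the iteration count $2m_h^2$. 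Granting this, $y_{ih}$ — and hence $b_{ih}$, which is just a positive rescaling of it — satisfies the $\delta$-almost $\eps$-proper property with respect to $v_{ih}$: that is, $b_{ih}(c) \le \eps\, b_{ih}(c')$ whenever $(v_{ih})_c + \delta \le (v_{ih})_{c'}$.

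Finally I would translate back. Since $(v_{ih})_c = K_i^{h,c}(b)$, the condition $(v_{ih})_c + \delta \le (v_{ih})_{c'}$ is precisely $K_i^{h,c}(b) + \delta \le K_i^{h,c'}(b)$, so we conclude $b_{ih}(c) \le \eps\, b_{ih}(c')$ exactly in the required cases. As $i$, $h$, $c$, $c'$ were arbitrary, $b$ is a $\delta$-almost $\eps$-quasi-proper equilibrium of $\Gamma$. The main obstacle — really the only non-routine point — is getting the parameter bookkeeping between Proposition~\ref{PROP:UniformPIteration-is-almost-proper} (stated for $\sqrt{\eps}$, $\eps \le 1/m$, domain $\Delta_m^{\eta_m}$) and the definition of $F_{\eps,\delta}$ (built with parameter $\eps$, domain $B^{\eta(\eps^2)}$, iteration count $2m_h^2$) to line up correctly; everything else is a direct unpacking of definitions together with Lemma~\ref{LEM:K-formula}.
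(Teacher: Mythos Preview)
Your proof is correct and follows essentially the same approach as the paper's: unwind the fixed point condition, apply Proposition~\ref{PROP:UniformPIteration-is-almost-proper} to each $y_{ih}$, note that the $\delta$-almost $\eps$-proper property is preserved under positive rescaling, and conclude via $v_{ih}=K_i^{h,\cdot}(b)$. In fact you are more careful than the paper on the one subtle point, namely that Proposition~\ref{PROP:UniformPIteration-is-almost-proper} yields the $\sqrt{\eps}$-proper property under the hypothesis $\eps\le 1/m$, so the operator must be instantiated with parameter $\eps^2$ (consistent with the domain $B^{\eta(\eps^2)}$) to obtain the $\eps$-proper property; the paper's proof simply elides this bookkeeping.
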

\begin{proof}
  Suppose that $b \in B^{\eta(\eps^2)}$ is a fixed point of
  $F_{\eps,\delta}$. For every~$i$ and every information set $h$ of
  Player~$i$ follows that $b_{ih}=y_{ih}/\norm{y_{ih}}_1$. By
  Proposition~\ref{PROP:UniformPIteration-is-almost-proper} $y_{ih}$
  satisfies the $\delta$-almost $\eps$-proper property with respect to
  valuation $v_{ih}$. This implies that $b_{ih}$ satisfies the
  $\delta$-almost $\eps$-proper property with respect to valuation
  $v_{ih}$ as well. Since this holds for all~$i$ and~$h$, we can
  conclude that $b$ is a $\delta$-almost $\eps$-quasi-proper
  equilibrium.  \end{proof} By Lemma~\ref{LEM:K-formula} the
valuations $v_{ih}$ may be computed by a polynomial size
$\{+,-,*,/,\max\}$-formula. Likewise, as seen from their definition,
the functions $P_{m,\delta,\eps}$ may be computed by polynomial size
$\{+,-,*,/,\max,\min\}$-formulas. All these formulas may furthermore
be constructed in polynomial time. The function $F_{\eps,\delta}$ is
given by combining polynomially many such formulas into a
\emph{circuit}. In conclusion we obtain the following result,
analogously to \cite[Theorem~4.17]{EC:HansenL18}.
\begin{theorem}
  There exists a function
  $F_{\eps,\delta} : B^{\eta(\eps^2)} \rightarrow B^{\eta(\eps^2)}$
  that is given by a $\{+,-,*,/,\max,\min\}$-circuit computable in
  polynomial time from $\Gamma$, with the circuit having inputs $b$,
  $\eps>0$, and $\delta>0$, such that for all fixed $0<\eps<1$ and
  $\delta>0$, every fixed point of $F_{\eps,\delta}$ is a
  $\delta$-almost $\eps$-quasi-proper equilibrium of $\Gamma$. In
  particular, the problem of computing a $\delta$-almost
  $\eps$-quasi-proper equilibrium of a finite extensive form game of
  perfect recall $\Gamma$ is in $\FIXP$.
  \label{THM:almost-quasi-proper-FIXP}
\end{theorem}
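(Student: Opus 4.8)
The plan is to assemble the circuit $F_{\eps,\delta}$ explicitly from the ingredients already established, and then to verify that it is a valid instance of the $\FIXP$ machinery, i.e.\ that it is a $\{+,-,*,/,\max,\min\}$-circuit mapping a polytope described by linear inequalities to itself, constructible in polynomial time, together with the fixed-point characterisation. First I would recall that by Lemma~\ref{LEM:K-formula} each valuation $V_i^{h,c}$ is a polynomial-size $\{+,-,*,/,\max\}$-formula, computable in polynomial time, and that it correctly computes $K_i^{h,c}(b)$ on the fully mixed profiles; since $B^{\eta(\eps^2)} \subseteq B$ consists of fully mixed profiles, these formulas are well-defined and correct on the entire domain. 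Next I would note that $\deltasel$ is given by an explicit $\{+,-,*,/,\min,\max\}$-formula (writing the three-case definition as $\max(\,y,\,\min(\,x,\,(1-z/\delta)x+(z/\delta)y\,)\,)$ or a similar case-blend, with $\delta$ a circuit input), hence so is $P_{m_h,\delta,\eps}$, being a coordinatewise $\min$ over $m_h$ such formulas; iterating it $2m_h^2$ times gives a formula of size polynomial in the game (since $m_h$ is at most the game size and $2m_h^2$ iterations only squares it). Finally the normalisation $b'_{ih} = y_{ih}/\norm{y_{ih}}_1$ is a division formula, and stacking these over all $i$ and all $h \in H_i$ yields the circuit $F_{\eps,\delta}$, whose output lies in $B^{\eta(\eps^2)}$ by Proposition~\ref{PROP:UniformPIteration-is-almost-proper} (with $\eps^2$ playing the role of the ``$\eps \le 1/m$'' parameter so that the iterate sits in $\Delta_{m_h}^{\eta_{m_h}(\eps^2)}$, matching the definition of $B^{\eta(\eps^2)}$).

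The fixed-point correctness is then exactly the preceding Proposition: if $b$ is a fixed point of $F_{\eps,\delta}$ then $b_{ih} = y_{ih}/\norm{y_{ih}}_1$ for every $i,h$, and $y_{ih}$ satisfies the $\delta$-almost $\eps$-proper property with respect to $v_{ih}$, which by the relation $(v_{ih})_c = V_i^{h,c}(b) = K_i^{h,c}(b)$ translates precisely into the defining inequality $b_{ih}(c) \le \eps\, b_{ih}(c')$ whenever $K_i^{h,c}(b) + \delta \le K_i^{h,c'}(b)$; since $b$ is fully mixed, $b$ is a $\delta$-almost $\eps$-quasi-proper equilibrium of $\Gamma$. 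For membership in $\FIXP$ one observes that $B^{\eta(\eps^2)}$ is a product of the polytopes $\Delta_{m_h}^{\eta_{m_h}(\eps^2)}$, each described by the linear inequalities $y_j \ge \eta_{m_h}(\eps^2)$ and the equality $\sum_j y_j = 1$; for fixed rational $\eps$ these inequalities have polynomial-size rational data, so the search problem ``find a Brouwer fixed point of $F_{\eps,\delta}$'' is a legal $\FIXP$ instance, and such a fixed point exists by Brouwer's theorem since $F_{\eps,\delta}$ is continuous and maps the convex polytope to itself. Totality of the search problem therefore comes for free.

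The only genuinely delicate point is ensuring that all the ``cf.'' lemmas from~\cite{EC:HansenL18} transfer to the extensive-form setting without hidden dependence on the strategic-form structure. The key observation, already emphasised in the surrounding text, is that the maps $P_{m,\delta,\eps}$, $P^v_{m,\delta,\eps}$, $\deltasel$, and the iteration analysis in Proposition~\ref{PROP:UniformPIteration-is-almost-proper} are purely combinatorial statements about a point $x \in \Delta_m$ and an arbitrary valuation vector $v \in \RR^m$; they know nothing about where $v$ came from. Hence the sole adaptation needed is to replace the strategic-form valuations $v_c = U_i(x_{-i};c)$ by the extensive-form valuations $(v_{ih})_c = V_i^{h,c}(b) = K_i^{h,c}(b)$, which is exactly the content of Lemma~\ref{LEM:K-formula}. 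I would therefore present the proof as: (i) invoke Lemma~\ref{LEM:K-formula} for the valuation formulas; (ii) write down the $\{+,-,*,/,\max,\min\}$-formula for $\deltasel$ and hence for the $2m_h^2$-fold iterate of $P^{v_{ih}}_{m_h,\delta,\eps}$; (iii) assemble $F_{\eps,\delta}$ by coordinate-stacking and normalisation, checking the codomain claim via Proposition~\ref{PROP:UniformPIteration-is-almost-proper}; (iv) read off polynomial-time constructibility from the polynomial size of each component and the fact that $2m_h^2$ is polynomially bounded; (v) conclude $\FIXP$ membership from the polytope description of $B^{\eta(\eps^2)}$ together with the fixed-point correctness already proved in the preceding Proposition. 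Steps (i)--(v) are all essentially bookkeeping; the ``obstacle'', such as it is, is purely expository, namely convincing the reader that the transfer in (i)--(ii) is as mechanical as claimed.
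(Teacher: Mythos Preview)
Your proposal is correct and follows essentially the same approach as the paper: the paper's argument is precisely to invoke Lemma~\ref{LEM:K-formula} for the valuations, observe that $P_{m,\delta,\eps}$ is given by polynomial-size $\{+,-,*,/,\max,\min\}$-formulas, combine these into a circuit, and appeal to the preceding Proposition for fixed-point correctness, all analogously to~\cite[Theorem~4.17]{EC:HansenL18}. Your write-up is in fact more explicit than the paper's (in particular your check of the codomain via Proposition~\ref{PROP:UniformPIteration-is-almost-proper} and your remark that $\eps^2$ is the parameter fed to that Proposition so that the $\sqrt{\cdot}$ yields the desired $\eps$-proper property), but the structure and ingredients are identical.
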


The third step is to quantify how small $\eps>0$ and $\delta>0$ need
to be in order to guarantee that Lemma~\ref{LEM:Almost-implies-near1}
and Lemma~\ref{LEM:Almost-implies-near2} apply. Such bounds can be
obtained in a completely generic way using the general machinery of
real algebraic geometry, cf. Basu, Pollack, and
Roy~\cite{BasuPollackRoy2006}, and was applied for the same purpose in
previous
works~\cite{SAGT:EtessamiHMS14,GEB:Etessami2021,EC:HansenL18}. The
approach involves formalizing the statements of
Lemma~\ref{LEM:Almost-implies-near1} and
Lemma~\ref{LEM:Almost-implies-near2} in the first order theory of the
reals. More precisely, doing this for
Lemma~\ref{LEM:Almost-implies-near1} results in a formula depending on
$\Gamma$ and $\gamma$ with a free variable $\eps$. This formula is
built from the definition of a $\eps$-quasi-proper equilibrium as well
as the formula of Lemma~\ref{LEM:K-formula}. Applying quantifier
elimination to that formula and employing known bounds on the result
of this we obtain the following statement, analogously to \cite[Lemma~4.18]{EC:HansenL18}.
\begin{lemma}
  There exists a polynomial $q_1$ such that for any finite extensive
  form game $\Gamma$ of perfect recall and any $0<\gamma<1/2$,
  whenever $0<\eps<\gamma^{2^{q_1(\Abs{\Gamma})}}$
  any $\eps$-quasi-proper equilibrium of $\Gamma$ has
  $L_\infty$-distance at most $\gamma$ to some quasi-proper
  equilibrium of $\Gamma$.
  \label{LEM:infinitesimal1}
\end{lemma}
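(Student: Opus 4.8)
The plan is to follow the standard real‑algebraic‑geometry route used in the cited works.

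The plan is to follow the standard real algebraic geometry route used for the analogous statements in~\cite{SAGT:EtessamiHMS14,GEB:Etessami2021,EC:HansenL18}; the only ingredient specific to the extensive-form setting is that the valuations entering the definition of an $\eps$-quasi-proper equilibrium are the quantities $K_i^{h,c}$, which by Lemma~\ref{LEM:K-formula} are computed by polynomial size $\{+,-,*,/,\max\}$-formulas $V_i^{h,c}$. First I would fix $\Gamma$ and write a first order formula $\Phi_\Gamma(\eps,\gamma)$ in the theory of real closed fields, with free variables $\eps$ and $\gamma$, expressing the ``bad event'': $0<\gamma<1/2$, $\eps>0$, and
\[
  \exists b\,\Bigl[\,b \text{ is an }\eps\text{-quasi-proper equilibrium of }\Gamma\ \wedge\ \forall b'\,\bigl(b' \text{ is a quasi-proper equilibrium of }\Gamma\ \Rightarrow\ \norm{b-b'}_\infty\ge\gamma\bigr)\Bigr].
\]
Full mixedness, the $\ell_\infty$-distance predicate, and the defining inequalities $b_{ih}(c)\le\eps\,b_{ih}(c')$ are immediate to write down; the valuations occurring in the hypotheses $K_i^{h,c}(b)<K_i^{h,c'}(b)$ are handled by adjoining existentially quantified auxiliary variables for the gate values of the formulas $V_i^{h,c}$ and imposing the corresponding polynomial relations (a division gate $w=u/v$ becomes $wv=u$, legal since full mixedness keeps every relevant realization weight positive, and a gate $w=\max(u,v)$ becomes $w\ge u\wedge w\ge v\wedge(w=u\vee w=v)$). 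The predicate ``$b'$ is a quasi-proper equilibrium'' is expanded according to its definition as a limit point, $\forall\mu>0\ \exists\eps''>0\ \exists b''\,\bigl(\eps''<\mu\wedge b'' \text{ is an }\eps''\text{-quasi-proper equilibrium}\wedge\norm{b'-b''}_\infty<\mu\bigr)$, again first order. Altogether $\Phi_\Gamma$ has a constant number of quantifier alternations and size polynomial in $\Abs{\Gamma}$, using Lemma~\ref{LEM:K-formula}.

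Next I would apply quantifier elimination (Basu, Pollack, and Roy~\cite{BasuPollackRoy2006}) to $\Phi_\Gamma$. Since the number of quantifier blocks is constant and the atoms have constant degree, the resulting quantifier free formula $\Psi_\Gamma(\eps,\gamma)$ is a Boolean combination of sign conditions on polynomials of degree at most $2^{\mathrm{poly}(\abs{\Gamma})}$ and coefficient bitsize at most $2^{\mathrm{poly}(\abs{\Gamma})}$, so the bad set $\mathcal B=\{(\eps,\gamma)\mid\Psi_\Gamma(\eps,\gamma)\}$ is semialgebraic of this controlled complexity. I then feed in the qualitative statement: Lemma~\ref{LEM:Almost-implies-near1} supplies, for each $\gamma\in(0,1/2)$, some $\eps_1>0$ such that every $\eps_1$-quasi-proper equilibrium is within $\gamma$ of some quasi-proper equilibrium; and since an $\eps$-quasi-proper equilibrium is also an $\eps'$-quasi-proper equilibrium for every $\eps'\ge\eps$ (the defining inequalities only weaken), the same conclusion holds for all $\eps\in(0,\eps_1]$. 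Hence each slice $\mathcal B_\gamma=\{\eps\mid(\eps,\gamma)\in\mathcal B\}$ has $\inf\mathcal B_\gamma>0$ (with $\inf\emptyset=+\infty$), for every $\gamma\in(0,1/2)$.

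It remains to make ``$\inf\mathcal B_\gamma>0$'' quantitative. Two observations help. First, $\mathcal B_\gamma$ is non-increasing in $\gamma$ --- being $\gamma'$-far from every quasi-proper equilibrium is stronger than being $\gamma$-far whenever $\gamma<\gamma'$ --- so $\gamma\mapsto\inf\mathcal B_\gamma$ is non-decreasing on $(0,1/2)$, and it suffices to bound it from below for $\gamma$ near~$0$. Second, this function is semialgebraic and definable from $\mathcal B$, hence described by polynomials obeying the same degree and bitsize bounds; by the standard effective control of Puiseux expansions at a boundary point and of the least positive value of a semialgebraic function of given complexity (cf.~\cite{BasuPollackRoy2006}, applied exactly as in~\cite[Lemma~4.18]{EC:HansenL18}), there is $\gamma_0\ge 2^{-2^{\mathrm{poly}(\abs{\Gamma})}}$ such that on $(0,\gamma_0)$ it is bounded below by $c\,\gamma^{N}$ with $N\le 2^{\mathrm{poly}(\abs{\Gamma})}$ and $c\ge 2^{-2^{\mathrm{poly}(\abs{\Gamma})}}$. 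Using $0<\gamma<1/2$ to absorb the constants into the exponent, and the monotonicity to pass from $(0,\gamma_0)$ to all of $(0,1/2)$, one obtains a polynomial $q_1$ with $\inf\mathcal B_\gamma\ge\gamma^{2^{q_1(\abs{\Gamma})}}$ throughout $(0,1/2)$; then $0<\eps<\gamma^{2^{q_1(\abs{\Gamma})}}$ forces $\eps<\inf\mathcal B_\gamma$, hence $(\eps,\gamma)\notin\mathcal B$, which is precisely the assertion. The main obstacle is this last step together with the complexity accounting for $\Phi_\Gamma$: one has to verify that the auxiliary-variable encoding of the $V_i^{h,c}$ and the nested quantification over quasi-proper equilibria leave only constantly many quantifier blocks and polynomial total size, so that the degree and coefficient bitsize of $\Psi_\Gamma$ remain singly exponential in a polynomial in $\abs{\Gamma}$; given that, the remaining estimates are routine and parallel the strategic-form argument of~\cite{EC:HansenL18} verbatim.
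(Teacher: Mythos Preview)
Your proposal is correct and follows essentially the same approach the paper sketches: formalize the statement of Lemma~\ref{LEM:Almost-implies-near1} in the first-order theory of the reals using the formulas $V_i^{h,c}$ of Lemma~\ref{LEM:K-formula}, apply quantifier elimination, and invoke the standard degree and bitsize bounds from~\cite{BasuPollackRoy2006}, exactly as in~\cite[Lemma~4.18]{EC:HansenL18}. The only presentational difference is that you keep $\gamma$ as a second free variable and analyze the two-dimensional semialgebraic ``bad set,'' whereas the paper (following~\cite{EC:HansenL18}) treats $\gamma$ as a parameter and has only $\eps$ free; both routes yield the same doubly-exponential bound.
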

Similarly for Lemma~\ref{LEM:Almost-implies-near2} we construct a
formula depending on $\Gamma$, $\gamma$, and $\eps$ with a free
variable $\delta$. Again applying quantifier elimination to that
formula and employing known bounds on the result of this we obtain the
following statement, analogously to \cite[Lemma~4.19]{EC:HansenL18}.
\begin{lemma}
  There exists a polynomial $q_2$ such that for any finite extensive
  form game $\Gamma$ of perfect recall, any $0<\gamma<1/2$, and any
  $\eps>0$, whenever $0<\min(\delta,\eps)^{2^{q_2(\Abs{\Gamma})}}$ any
  $\delta$-almost $\eps$-quasi-proper equilibrium of $\Gamma$ has
  $L_\infty$-distance at most $\gamma$ to some $\eps$-quasi-proper
  equilibrium of $\Gamma$.
  \label{LEM:infinitesimal2}
\end{lemma}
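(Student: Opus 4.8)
The plan is to follow the now-standard route from a qualitative ``almost implies near'' statement to a quantitative one via effective quantifier elimination over real closed fields, exactly as in~\cite{SAGT:EtessamiHMS14,GEB:Etessami2021,EC:HansenL18}; here Lemma~\ref{LEM:Almost-implies-near2} already supplies the qualitative content and the task is merely to extract an explicit lower bound on the permissible~$\delta$. Everything below runs parallel to \cite[Lemma~4.19]{EC:HansenL18}, with the strategic-form valuations $U_i(x_{-i};c)$ replaced by the formulas $V_i^{h,c}$ of Lemma~\ref{LEM:K-formula}.

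First I would write a first-order formula $\Psi(\eps,\delta,\gamma)$ in the language of ordered fields expressing the conclusion of Lemma~\ref{LEM:Almost-implies-near2}: ``every $\delta$-almost $\eps$-quasi-proper equilibrium $b$ of $\Gamma$ lies within $\ell_\infty$-distance $\gamma$ of some $\eps$-quasi-proper equilibrium $b'$''. Its matrix is assembled from the linear (in)equalities describing the relevant strategy domain, the polynomial identities defining the realization probabilities $\rho_b(\cdot)$ of $\Gamma$ (whose coefficients are the input utilities, hence rationals of polynomial bit-length), and the valuation formulas $V_i^{h,c}$; note that, thanks to Lemma~\ref{LEM:K-formula}, the internal maximizations hidden in the $K_i^{h,c}$ need not be introduced as quantifiers, so $\Psi$ has only two quantifier blocks ($\forall b\,\exists b'$). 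Since the $V_i^{h,c}$ are $\{+,-,*,/,\max\}$-formulas I first turn them into genuine polynomial atomic formulas: each $\max$ is replaced by a disjunction recording which argument is largest, and each division is cleared after observing that on fully mixed profiles every denominator is a product of strictly positive behavior probabilities and hence never vanishes, which makes the clearing an equivalence rather than a one-way implication. The outcome is a prenex formula with polynomially many variables and atomic polynomials, each of degree polynomial in $\abs{\Gamma}$ and with integer coefficients of bit-length polynomial in $\abs{\Gamma}$.

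Second I would invoke an effective quantifier-elimination theorem (Basu--Pollack--Roy~\cite{BasuPollackRoy2006}, or Renegar) to remove the bound variables, obtaining an equivalent quantifier-free formula in the free variables $\eps,\delta,\gamma$ whose defining polynomials have degree and height bounded by $2^{p(\abs{\Gamma})}$ for a fixed polynomial $p$. By Lemma~\ref{LEM:Almost-implies-near2}, for each fixed $\eps>0$ and $0<\gamma<1/2$ the formula $\Psi(\eps,\cdot,\gamma)$ holds for all small enough $\delta>0$; moreover, as observed after the definition of $\delta$-almost quasi-properness, a stricter gap requirement imposes more ratio constraints, so the set $E_\delta$ of $\delta$-almost $\eps$-quasi-proper equilibria grows with $\delta$ while the set of $\eps$-quasi-proper equilibria is fixed, and hence the truth set of $\Psi(\eps,\cdot,\gamma)$ is a down-set, i.e.\ an interval $(0,\delta_0)$ or $(0,\delta_0]$. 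Therefore $\delta_0$ is, or is bounded below by, a positive root of one of the eliminating polynomials regarded as a univariate polynomial in $\delta$ whose coefficients depend on $\eps$ and $\gamma$. A Cauchy/Mahler-type lower bound on the smallest positive root of such a polynomial, together with $\gamma<1/2$ (and, as in~\cite{EC:HansenL18}, restricting to the relevant range of $\eps$), then yields $\delta_0\geq\min(\eps,\gamma)^{2^{q_2(\abs{\Gamma})}}$ for an appropriate polynomial $q_2$, which is the asserted bound.

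The main obstacle I anticipate is bookkeeping rather than a single deep step: carrying $\eps$ as a free variable throughout --- which is forced by the requirement that the lemma hold uniformly over all $\eps>0$, and which is exactly why the bound comes out in terms of $\min(\eps,\gamma)$ rather than $\gamma$ alone --- and, above all, making the first-order encoding of $\Psi$ genuinely faithful, in particular certifying positivity of every denominator appearing in the $V_i^{h,c}$ so that clearing it preserves logical equivalence on the domain under consideration. This last point is the most error-prone, but it is precisely the issue already handled in~\cite{GEB:Etessami2021,EC:HansenL18}, so the safest course is to reuse those arguments.
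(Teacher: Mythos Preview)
Your proposal is correct and follows essentially the same route the paper indicates: encode Lemma~\ref{LEM:Almost-implies-near2} as a first-order formula over the reals (using the valuation formulas $V_i^{h,c}$ of Lemma~\ref{LEM:K-formula} in place of the strategic-form payoffs), apply effective quantifier elimination~\cite{BasuPollackRoy2006}, and extract the quantitative bound on~$\delta$ from the resulting root bounds, exactly as in \cite[Lemma~4.19]{EC:HansenL18}. Your write-up is in fact more detailed than the paper's own treatment, which simply asserts the result ``analogously'' to the cited lemma; the only cosmetic difference is that you keep $\eps$ and $\gamma$ as free variables throughout whereas the paper describes them as parameters with $\delta$ the sole free variable, but this does not affect the argument.
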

We can now complete the proof of Theorem~\ref{THM:MainMultiPlayer}. As done for
the case of approximating proper equilibrium~\cite{EC:HansenL18}, the idea is to
construct two virtual infinitesimals $\delta \ll \eps$, given $\Gamma$ and
$\gamma>0$, by means of repeated squaring, according to
Lemma~\ref{LEM:infinitesimal1} and Lemma~\ref{LEM:infinitesimal2}.
\begin{proof}[Proof of Theorem~\ref{THM:MainMultiPlayer}]
  Given an extensive form game of perfect recall $\Gamma$ and a
  rational $\gamma>0$ we shall in polynomial time construct a
  $\{+,-,*,/,\max,\min\}$-circuit $C$ computing a function
  $F: B \rightarrow B$ such that any fixed point of $F$ is
  $\gamma$-close to a quasi-proper equilibrium of $\Gamma$. This is
  sufficient to establish $\FIXP_a$-membership.

  The circuit $C$ will first compute $\eps>0$ satisfying the condition
  of Lemma~\ref{LEM:infinitesimal1} by repeated squaring of $\gamma/2$
  exactly $q_1(\Abs{\Gamma})$ times. Then $C$ computes $\delta>0$
  satisfying the condition of Lemma~\ref{LEM:infinitesimal2} by
  repeated squaring of $\min(\gamma/2,\eps)$ exactly
  $q_2(\Abs{\Gamma})$ times. Next we need to restrict the input to
  $B^{\eta(\eps^2)}$ before we can apply the function
  $F_{\eps,\delta}$ of Theorem~\ref{THM:almost-quasi-proper-FIXP}. For
  this we need to map the input $x \in B$ into $B^{\eta(\eps^2)}$ by a
  mapping that is the identity function on $B^{\eta(\eps^2)}$. One way
  to achieve this (cf.~\cite{EC:HansenL18}) is to compute for every
  $i$ and $h$ a number $t_{ih}$ such that
  $\sum_{c \in C_h} \max(b_{ih}(c)-t_{ih},\eta_{m_h}(\eps^2))=1$ using
  a sorting network as done by Etessami and
  Yannakakis~\cite{SICOMP:EtessamiY10} and then map each $b_{ih}(c)$
  to $\max(b_{ih}(c)-t_{ih},\eta_{m_h}(\eps^2))$. Finally
  $F_{\eps,\delta}$ is applied to the output of this together with the
  constructed $\eps$ and $\delta$. By
  Theorem~\ref{THM:almost-quasi-proper-FIXP} any fixed-point of $F$ is
  then a $\delta$-almost $\eps$-quasi-proper equilibrium of
  $\Gamma$. By Lemma~\ref{LEM:infinitesimal1} this is $\gamma/2$-close
  to a $\eps$-quasi-proper equilibrium which in turn by
  Lemma~\ref{LEM:infinitesimal2} is $\gamma/2$-close to a quasi-proper
  equilibrium of $\Gamma$. The proof is then concluded by the triangle
  inequality.  \end{proof}

\bibliographystyle{abbrv}
\bibliography{quasiproper}

\end{document}